\newcounter{comments}
\newcommand{\new}[2][]{#2}
\DeclareFontFamily{OT1}{rsfs}{}
\DeclareFontShape{OT1}{rsfs}{n}{it}{<-> rsfs10}{}
\DeclareMathAlphabet{\mathscr}{OT1}{rsfs}{n}{it}
\DeclareMathOperator{\E}{E}
\DeclareMathOperator{\fix}{fix}
\DeclareMathOperator{\ord}{ord}
\newtheorem{ex}{Exercise}
\newtheorem{dfn}{Definition}[section]
\newtheorem{cor}[dfn]{Corollary}
\newtheorem{eg}[dfn]{Example}
\newtheorem{nota}[dfn]{Notation}
\newtheorem{lem}[dfn]{Lemma}
\newtheorem{remark}[dfn]{Remark}
\newtheorem{thm}[dfn]{Theorem}
\newlength{\remaining}
\newcommand{\bb}{\mathbb}
\newcommand{\be}[1]{\begin{equation}#1
\end{equation}}
\newcommand{\bea}[1]{\begin{eqnarray}#1
\end{eqnarray}}
\newcommand{\bean}[1]{\begin{eqnarray*}#1
\end{eqnarray*}}
\newcommand{\bpm}[1]{\begin{pmatrix}#1
\end{pmatrix}}
\newcommand{\disp}{\displaystyle}
\newcommand{\ee}{\end{equation}}
\newcommand{\eea}{\end{eqnarray}}
\newcommand{\eean}{\end{eqnarray*}}
\newcommand{\een}{\end{equation*}}
\newcommand{\epm}{\end{pmatrix}}
\newcommand{\f}{\mathfrak}
\newcommand{\la}{\left|}
\newcommand{\lr}{\left(}
\newcommand{\lt}{\left\langle}
\newcommand{\ra}{\right|}
\newcommand{\rr}{\right)}
\newcommand{\rt}{\right\rangle}
\newcommand{\x}{\xymatrix@1@=50pt@M=4pt@L=3pt}
\title{Stability of Linear Boolean Networks}
\author{{Karthik Chandrasekhar}\\
Rashtreeya Vidyalaya University\\
\and
{Claus Kadelka}\\
Iowa State University\\
\and
{Reinhard Laubenbacher}\\
University of Florida\\
\and
{David Murrugarra}\\
University of Kentucky\\
}
\begin{document}

\maketitle

\begin{abstract}
Stability is an important characteristic of network models that has implications for other desirable aspects such as controllability. The stability of a Boolean network (BN) depends on various factors, such as the topology of its wiring diagram and the type of the functions describing its dynamics. Linear Boolean networks can be completely described by their wiring diagram, and therefore the structure of linear networks plays a prominent role in determining their stability. In this paper, we study the stability of linear Boolean networks by computing Derrida curves and quantifying the number of attractors and cycle lengths imposed by their network topologies. Derrida curves are commonly used to measure the stability of Boolean networks and several parameters such as the average in-degree $K$ and the output bias $p$ can indicate if a network is stable, critical, or unstable. For random unbiased Boolean networks there is a critical connectivity value $K_c=2$ such that if $K<K_c$ networks operate in the ordered regime, and if $K>K_c$ networks operate in the chaotic regime. Here, we show that for linear networks, which are the least canalizing and most unstable, the phase transition from order to chaos already happens at an average in-degree of $K_c=1$. Consistently, we also show that unstable networks exhibit a large number of attractors with very long limit cycles while stable and critical networks exhibit fewer attractors with shorter limit cycles. Additionally, we present theoretical results to quantify important dynamical properties of linear networks. First, we present a formula for the proportion of attractor states in linear systems. Second, we show that the expected number of fixed points in linear systems is 2, while general Boolean networks possess on average one fixed point. Third, we present a formula to quantify the number of bijective linear Boolean networks and provide a lower bound for the percentage of this type of network.

\end{abstract}
\section{Introduction}

Boolean networks (BNs) are popular models used in biology and engineering due to their intuitive formalism, and their ability to capture important dynamical features of biochemical networks without the need for estimating precise kinetic rate constants~\cite{thomas1990biological,veliz2022building}. A Boolean network on $n$ variables is described by a ``wiring diagram," a directed graph on $n$ nodes, together with a Boolean coordinate function attached to each graph node. Iterative application of these functions to binary strings of length $n$ generates a dynamical system for the Boolean network. Its dynamics can be described in terms of another directed graph, its state space graph. \new[It has]{This graph contains} all $2^n$ binary strings of length $n$ as nodes, with directed edges between states capturing the action of the Boolean functions. In general, the state space graph can only be computed by exhaustive simulation of the Boolean network, but can in special cases be optimized by utilizing properties of the wiring diagram or the regulatory functions. In general, there is no mathematical theory for complete characterization of the dynamics of all Boolean networks. For some classes of Boolean networks, theoretical results to obtain important information on their dynamics \new[has]{have} been established. In particular, for linear Boolean networks, a mathematical framework is available for obtaining the number of attractors and the structure of their attractor basins~\cite{RHT}. Moreover, for linear systems, one can derive a generating function that provides the number of cycles of a given length~\cite{RHT}. Similarly, for conjunctive networks (that is, Boolean networks whose Boolean functions are all AND rules) with strongly connected wiring diagrams, \new[the]{a} formula \new[to obtain the number of attractors was given]{for the number of their attractors exists}~\cite{jarrah2010dynamics}. For general conjunctive networks, a sharp lower bound has been provided~\cite{jarrah2010dynamics}.

Boolean networks are discrete-time dynamical systems where the state of each variable at the next time step is determined by a Boolean function over a subset of the system variables. Attractors are sets of states in which the system will be trapped as it evolves. Given a Boolean network model, one commonly associates \new[the important states of the system with the attractors of the model]{its attractors with important states of the system}. For instance, when modeling gene regulatory networks, attactors are usually associated with the possible phenotypes of the cell~\cite{kauffman1969homeostasis,huang2009cancer}.
\new[For another example,]{Or, in cancer models,} attractors \new[in cancer modeling]{} might represent a differentiated cell type~\cite{kauffman1969metabolic} or a cellular state such as apoptosis, proliferation, or cell senescence~\cite{huang1999gene, plaugher2021modeling, plaugher2022uncovering}. Knowing the number of attractors of Boolean networks is very important as \new[this]{it} is related to the stability and controllability of networks~\cite{kauffman2004genetic,murrugarra2011regulatory}. 

\new{Linear systems often serve as useful approximations in the analysis of continuous-variables models such those based in ordinary differential equations~\cite{strogatz2018nonlinear}. For discrete systems, linearization approaches have not been used. However, any Boolean network can be represented as a linear system in a higher dimensional space \cite{shmulevich2002probabilistic}. Moreover, in recent years, the semi-tensor product representation of Boolean networks have been used for the analysis and control of Boolean networks \cite{cheng2010analysis,luo2014controllability,chen2021model}. Indeed, the semi-tensor representation is a type of linear representation. Thus, obtaining new results for linear systems is still an important endeavor and many of these results could be useful for the analysis of nonlinear Boolean networks.}

In this paper, we implement the approach in~\cite{RHT} to compute the attractor distributions of linear systems in terms of the average number of attractors of a given length.
One advantage of the approach in~\cite{RHT} is that it allows us to calculate the number of attractors without identifying the actual attractor states which makes the task of quantifying the attractor distribution more efficient.

For several classes of Boolean functions and their networks, theoretical results to obtain statistics about their dynamics have been established. Even though linear systems are the simplest class of Boolean networks and several results are available to study their dynamics, important information remains unknown. This paper contributes additional results of this kind. For instance, we provide information about the expected number of fixed points. 
For conjunctive networks, the expected number of fixed points is 2~\cite{jarrah2010dynamics}. The average number of fixed points of random Boolean networks is 1~\cite{samuelsson2003superpolynomial,borriello2021basis}. Here we show that the average number of fixed points of linear systems is 2.

\new{This paper is structure in the following way.} In Section~\ref{sec:background} we \new[provide the definition of]{define} Boolean networks and \new{provide }other background \new{material}. In Section~\ref{sec:derrida} we present a formula to compute Derrida curves for linear systems along with plots of Derrida curves, attractor distribution, and size of their strongly connected components for networks with \new[Poisson, scale free, and constant]{constant and scale-free} in-degree distributions.
In Section~\ref{sec:attr_states} we present a formula for the proportion of attractor states in linear systems\new[.
In Section~\ref{sec:fps} we provide a]{, followed by a} formula for the average number of fixed points in linear systems \new{in Section~\ref{sec:fps}}.
In Section~\ref{sec:dim_fps} we \new[provide]{derive} a formula to quantify the number of bijective linear Boolean networks. Furthermore, we provide an estimate for the percentage of linear maps that are invertible. \new{Finally, we discuss our main findings and provide some conclusions in Section~\ref{sec:discussion}.}

\section{Background}\label{sec:background}
Boolean networks are dynamical systems that are discrete in time and state\new[ variables]{}. Specifically, consider a collection $x_1, \ldots , x_n$ of
variables, each of which can take on values in $\{0,1\}$. Then,
a (synchronously updated) Boolean network in the variables $x_1, \ldots , x_n$ is a function
$F= (f_1,\dots,f_n):\{0,1\}^n\to\{0,1\}^n$,
where each coordinate function $f_i$ is a discrete function on a subset of $\{x_1,\dots,x_n\}$, which represents how the future value of the $i$-th variable depends on the present values of the variables. When $F$ is a linear function \new{ (that is, all $f_i, i=1,\ldots,n$ are linear functions)}, then $F$ can be represented by a matrix. Thus, in the linear case, $F(x) = Mx$ for some \new{$n\times n$-}matrix $M$ over the field $\bb F_2$. For example consider the following:
\new{
\begin{eg}
For the case $n=3$, let
\be{F(x)=Mx}
where $\disp M=\bpm{1&1&0\\0&0&1\\0&0&1}$.
Now, if $\disp x=\bpm{1\\1\\1}$ we have
$F(x)=\bpm{0\\1\\1}$.
In general for the above matrix $M$,
\be{F\lr\bpm{x_1\\x_2\\x_3}\rr=\bpm{f_1(x)\\f_2(x)\\f_3(x)}=\bpm{x_1+x_2\\x_3\\x_3}\label{func_indiv}}
\label{linear_func_eg}
\end{eg}
}
Given a Boolean network $F = (f_1,\dots,f_n)$, its \emph{wiring diagram} $\mathcal{W}$ is defined to be the directed graph with $n$ nodes 
$x_1, \ldots , x_n$ associated to $F$, such that there is a directed edge in $\mathcal{W}$ from 
$x_j$ to $x_i$ if \new[$x_j$ appears in $f_i$, that is,]{and only if} the value of $f_i$ depends on $x_j$.

\new{
\begin{eg}
In the function $F$ described in equation \eqref{func_indiv}, the dependencies are simply that $f_1$ depends on $x_1$ and $x_2$, and that $f_2$ and $f_3$ depend on $x_3$ whereas no other dependencies are present. Therefore the wiring diagram is as in Figure \ref{fig:wiring}.
\begin{figure}
    \centering
    \includegraphics[height=10pc,width=3pc]{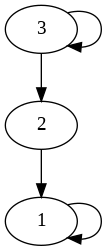}
    \caption{Wiring diagram for F(x) in Example \ref{linear_func_eg}}
    \label{fig:wiring}
\end{figure}
\end{eg}
}

The dynamics of discrete networks are given by the difference equation $x(t+1)= F(x(t))$; that is, the dynamics are generated by iteration of $F$. More precisely, the dynamics of $F$ are given by the \emph{state space graph} $S$, which has vertices
$\{0,1\}^n$ and an edge from $x\in \{0,1\}^n$ to $y\in \{0,1\}^n$ if and only 
if $y = F(x)$. \emph{Attractors} are terminal strongly connected components in the state space graph. 
Attractors are usually classified as either \emph{fixed points} or \emph{limit cycles}. 
Attractors of Boolean networks typically represent important outcomes. For example in a Boolean gene regulatory network model, attractors represent biological phenotypes.

\section{Stability of linear Boolean networks}\label{sec:derrida}

Linear Boolean functions can be considered as extreme in multiple ways. To know the output of a linear function one always needs to know all of its inputs. This is not true for any other function.  A consequence is that linear functions are the only functions with a canalizing strength of zero~\cite{kadelka2020collectively}, or zero input redundancy~\cite{gates2021effective}. Another frequently used stability measure, the average sensitivity of a Boolean function describes the sensitivity of a Boolean function to a single perturbation in \new{one of} its inputs~\cite{shmulevich2004activities}. For a linear function, the average sensitivity is always $1$ because a single change in its inputs flips the output. The average $c$-sensitivity of a linear function, a generalization of the average sensitivity (see~\cite{kadelka2017influence}), which can be thought of as the probability that the Boolean function output differs when exactly $c$ inputs differ, is $1$ if $c$ is odd and $0$ if $c$ is even. Using formulas described in~\cite{kadelka2017influence}, we can therefore very easily compute the Derrida plot of linear Boolean networks~\cite{derrida1986evolution}. This plot is frequently used to assess the stability of a Boolean network to perturbations. If a small perturbation becomes smaller on average after one update according to the Boolean rules, the network is in the \emph{stable} (or \emph{ordered}) {regime}. If the perturbation on average increases, it is in the \emph{chaotic regime}, and the small range in between where the perturbation on average remains of similar size is the \emph{critical regime}. Interestingly, most biological systems seem to operate in this critical regime~\cite{balleza2008critical,daniels2018criticality,kadelka2020meta}.

For random Boolean networks with output bias $p$ (the probability of having a 1 in the truth table) and average in-degree $k$, the phase transition from ordered to chaotic dynamics (i.e., the critical edge) occurs at $2kp(1-p) = 1$~\cite{derrida1986random,luque1997phase}. Assuming an unbiased selection of functions, that is $p=0.5$, this implies that random Boolean networks with an average degree of $k=2$ are critical. Random networks with $k<2$ exhibit on average ordered dynamics, while random networks with $k>2$ are on average unstable. Further, for networks governed by canalizing functions, e.g. most biological Boolean network models, the phase transition occurs at an average in-degree substantially greater than $k=2$~\cite{manicka2022effective}. Lastly, for nested canalizing functions, the average sensitivity is always 1 (i.e., nested canalizing networks always operate on average at the critical edge), irrespective of the average in-degree~\cite{kadelka2017influence}. We will now show that for linear networks, which are the least canalizing and most unstable, the phase transition from order to chaos already happens at an average in-degree of 1.

\begin{lem}\label{c_sens_linear}
The normalized average $c$-sensitivity of a Boolean linear function $f$ is $$q_c(f) = \begin{cases} 1 & \text{if $c$ is odd,}\\0 & \text{if $c$ is even}.\end{cases}$$
\end{lem}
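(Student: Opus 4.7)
The plan is to reduce to the essential variables of $f$ and then exploit the $\bb F_2$-linearity to compute the sensitivity directly. Any Boolean linear function has the form $f(x_1,\dots,x_n) = c_0 + \sum_{i=1}^n a_i x_i \pmod 2$ with $a_i\in\{0,1\}$ and $c_0\in\{0,1\}$. Let $k$ denote the number of essential variables, i.e., those with $a_i=1$. Since the other $a_i$ are zero, the output depends only on the essential coordinates, and after relabeling we may write $f(x)=c_0+x_1+\cdots+x_k \pmod 2$. The ``normalized'' average $c$-sensitivity, as used in the preceding discussion where a single flip to an essential input is said to change the output with probability $1$, restricts the flips to these essential coordinates.

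Next, I would carry out the main computation. For any $x,y\in\bb F_2^k$ let $S=\{i: x_i\neq y_i\}$, so $|S|=c$. Then
\[
f(x)+f(y) = \sum_{i\in S}(x_i+y_i) = \sum_{i\in S} 1 = c \pmod 2,
\]
because each differing coordinate contributes $x_i+y_i=1$ in $\bb F_2$. Hence $f(x)\neq f(y)$ if and only if $c$ is odd. This equality holds for every pair $(x,y)$ with Hamming distance $c$, so the conditional probability $q_c(f)$ equals $1$ when $c$ is odd and $0$ when $c$ is even, regardless of the additive constant $c_0$ or which subset of variables is essential.

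There is no real obstacle here; the whole lemma is a one-line consequence of the fact that over $\bb F_2$ the sum of $c$ ones has parity $c$. The only point requiring care is the convention for the ``normalized'' average $c$-sensitivity, which must be understood as averaging over perturbations of the essential inputs (otherwise flipping inert coordinates would dilute the value). Once this convention is fixed, consistently with the earlier statement that the ordinary average sensitivity of a linear function is $1$, the proof reduces to the parity calculation above.
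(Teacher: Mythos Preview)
Your proof is correct; the paper actually states this lemma without a formal proof, relying on the informal observation made in the paragraph just before it (each flipped input toggles the output of a linear function), which is exactly your parity computation $f(x)+f(y)\equiv c\pmod 2$. Your caveat about restricting to essential inputs is unnecessary in this context, since a linear update rule with in-degree $n_i$ is by convention $x_1+\cdots+x_{n_i}$, so every input is already essential and no dilution can occur.
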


\begin{thm}\label{thm_derrida_linear}
The Derrida value of a synchronously updated linear Boolean network $F=(f_1,\ldots,f_N)$ with in-degrees $n_1,\ldots,n_N$ can be expressed as a weighted sum of the normalized average $c$-sensitivities of its update functions,
\begin{equation}\label{derrida_thm}
D(F,m) := \mathbb{E}\Big[d\big(F(\mathbf x),F(\mathbf y)\big) \ \big| \ d(\mathbf x,\mathbf y) = m\Big] =  \sum_{i=1}^N\mathbb{P}\left(f_i(\mathbf x) \neq f_i(\mathbf y) \ \big| \ d(\mathbf x,\mathbf y) =m \right) = \sum_{i=1}^N \sum_{\substack{c=1\\ c\text{ odd}}}^{m}\ H_{N,m,n_i}(c),
\end{equation}
where
\[H_{N,m,n_i}(c) = \frac{\binom mc \binom{N-m}{n_i-c}}{\binom N{n_i}} = \frac{\binom {n_i}c \binom{N-n_i}{m-c}}{\binom N{m}}.\]
denotes the hypergeometric probability mass function.
\end{thm}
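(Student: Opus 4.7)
The plan is to decompose the expected Hamming distance using linearity of expectation and then, for each coordinate function $f_i$, condition on the number of inputs at which $\mathbf{x}$ and $\mathbf{y}$ differ. First, I would write
\[d\bigl(F(\mathbf{x}),F(\mathbf{y})\bigr)=\sum_{i=1}^N \mathbf{1}\{f_i(\mathbf{x})\neq f_i(\mathbf{y})\};\]
taking the conditional expectation given $d(\mathbf{x},\mathbf{y})=m$ and interchanging sum and expectation immediately yields the second equality in \eqref{derrida_thm} and reduces the theorem to computing, for each $i$, the quantity $\mathbb{P}(f_i(\mathbf{x})\neq f_i(\mathbf{y}) \mid d(\mathbf{x},\mathbf{y})=m)$.

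Next, I would introduce the random variable $C_i$ counting how many of the $n_i$ input positions of $f_i$ lie in the set of coordinates where $\mathbf{x}$ and $\mathbf{y}$ disagree. Under the standard Derrida setup, that disagreement set is chosen uniformly at random among the $m$-subsets of $\{1,\ldots,N\}$, so $C_i$ is hypergeometric with mass function $H_{N,m,n_i}(c)$; the two displayed expressions for $H_{N,m,n_i}(c)$ are equivalent by the standard symmetry of the hypergeometric (drawing $n_i$ positions from $N$ and counting how many land in the $m$-subset versus the reverse). Conditionally on $C_i=c$, Lemma~\ref{c_sens_linear} says that the probability $f_i(\mathbf{x})\neq f_i(\mathbf{y})$ equals $q_c(f_i)$, i.e., $1$ when $c$ is odd and $0$ when $c$ is even, because flipping an even number of inputs to a linear ($\mathbb{F}_2$-affine) function leaves the output unchanged. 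The law of total probability then gives
\[\mathbb{P}\bigl(f_i(\mathbf{x})\neq f_i(\mathbf{y}) \,\big|\, d(\mathbf{x},\mathbf{y})=m\bigr)=\sum_{\substack{c=1\\ c\text{ odd}}}^{m} H_{N,m,n_i}(c),\]
and summing over $i$ produces the claimed formula.

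The only obstacles are minor bookkeeping points. I need to be explicit about the probability model (the disagreement pattern is uniform over $m$-subsets, equivalently $\mathbf{x}$ is uniform on $\{0,1\}^N$ and $\mathbf{y}$ is obtained by flipping a uniformly chosen $m$-subset of coordinates), and I need to confirm that capping the sum at $c=m$ rather than $\min(m,n_i)$ is harmless: the binomial factors $\binom{m}{c}$ and $\binom{N-m}{n_i-c}$ in $H_{N,m,n_i}(c)$ vanish automatically whenever $c>\min(m,n_i)$, so no nontrivial terms are omitted. With these routine checks, the theorem reduces to Lemma~\ref{c_sens_linear} plus a single conditioning step.
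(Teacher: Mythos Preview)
Your proof is correct and follows essentially the same logical route as the paper: decompose the Derrida value coordinate-wise, express each coordinate's contribution via the hypergeometric distribution of the number of perturbed inputs, and then invoke Lemma~\ref{c_sens_linear} to kill the even-$c$ terms. The only difference is one of packaging: the paper does not carry out the linearity-of-expectation and conditioning argument explicitly but instead cites the general identity $D(F,m)=\sum_{i=1}^N\sum_{c=0}^m H_{N,m,n_i}(c)\,s_c(f_i)$ from \cite[Theorem~4.3]{kadelka2017influence} and then substitutes the values of $s_c$ from Lemma~\ref{c_sens_linear}, whereas you re-derive that identity from first principles, making your argument more self-contained.
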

\begin{proof}
From~\cite[Theorem 4.3]{kadelka2017influence}, we have for any synchronously updated Boolean network $$D(F,m) = \sum_{i=1}^N \sum_{c=0}^{m}\ H_{N,m,n_i}(c) s_c(f_i),$$ where $s_c(f_i)$ is the normalized average $c$-sensitivity of the update function $f_i$. Plugging in the simple form of $s_c(f_i)$ for linear functions (Lemma~\ref{c_sens_linear}) completes the proof.
\end{proof}

\begin{remark}
For computation purposes, one can even aggregate all linear functions with the same number of inputs and compute the Derrida values faster using
\begin{equation}\label{derrida_formula_fast}
D(F,m) = \sum_{n\in \mathbf n(F)} w_n \sum_{\substack{c=1\\ c\text{ odd}}}^{m}\ H_{N,m,n}(c),
\end{equation}
where $n(F)$ is the set of all unique in-degrees in the linear functions governing $F$ and $w_n$ is the corresponding distribution, i.e., $\sum_{n\in \mathbf n(F)} w_n= 1$. 

Moreover, assuming $\max \mathbf n << N$  (which is true e.g. for large networks with a Poisson distributed in-degree distribution), one can use well-known approximations for the hypergeometric probability distribution to further speed up the computation.
\end{remark}

We computed Derrida plots for two types of random linear Boolean networks:
\begin{enumerate}[(i)]
    \item networks of size $N=20$ with a fixed in-degree $k$ where each update function $f_i, i=1,\ldots, 20$ has the same number of inputs.
    \item scale-free networks of size $N=20$ whose out-degree distribution follows a power law. In these networks, the probability that a node regulates $l=0,\ldots, 20$ nodes is $P_{\text{out}}(l) = C_1l^{-\gamma}$. Note that the in-degree of a scale-free network is Poisson distributed~\cite{aldana2007robustness}. That is, the probability that a node is regulated by $k=0,\ldots,20$ nodes is $P_{\text{in}}(k) = C_2e^{K}\frac{K^k}{k!}$ where $K$ represents the average degree of the network and is determined by the scale-free parameter $\gamma$. The constants $C_1,C_2 >1$ are needed because both distributions are truncated at $N=20$.
\end{enumerate}
The former network model has been extensively studied due to its simplicity and straightforward implementation~\cite{derrida1986phase,kauffman1993origins,aldana2003boolean,samuelsson2003superpolynomial}. However, most biological networks exhibit scale-freeness and are thus much better modeled by the latter model~\cite{leclerc2008survival,barabasi1999emergence,newman2005power}. We therefore investigated both types of network models. We considered fixed in-degrees of $k=1,2,$ and $3$ as well as corresponding scale-free parameters $\gamma = 2.41, 1.67,$ and $1.33$. This ensures that both the fixed in-degree network and the corresponding scale-free network have the same average degree.

\begin{figure}
    \centering
    \includegraphics{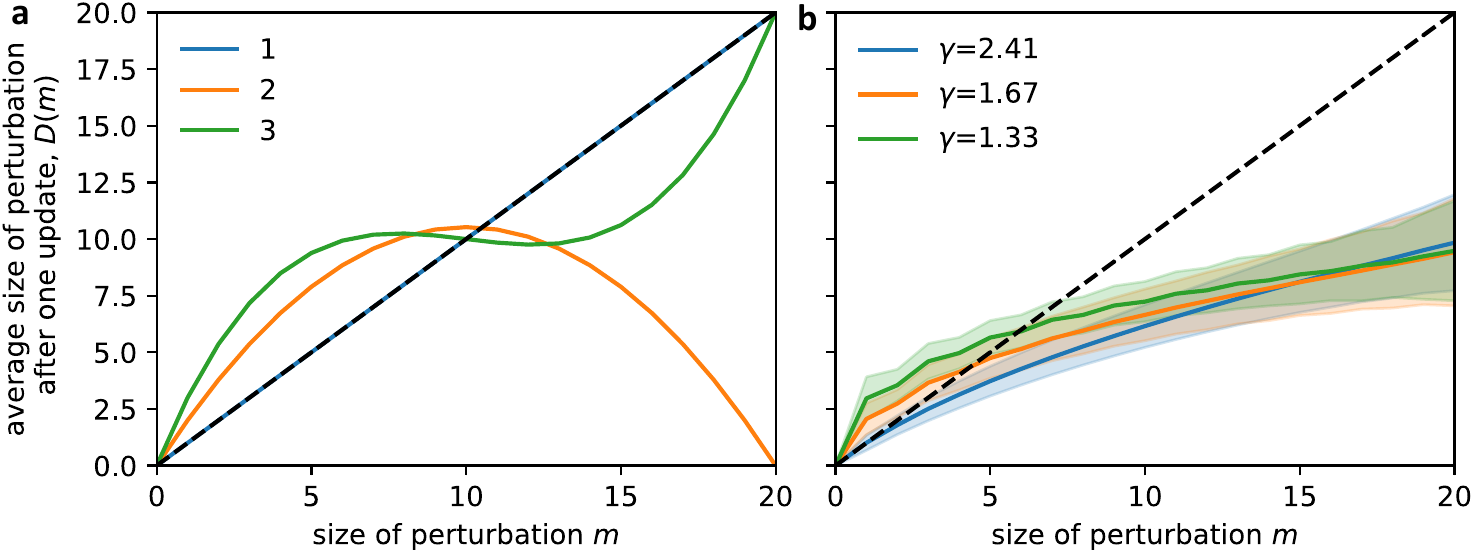}
    \caption{Derrida plot for linear networks with $N=20$ nodes and (a) fixed in-degree distribution of $k=1,2,3$ and (b) scale-free out-degree distribution with a \new{fixed} parameter $\gamma$ chosen to match the average degrees of the networks in (a). \new{That is, lines with the same color in (a) and (b) show results from linear networks with the same average degree.} Each curve is averaged across $50$ random networks. The shaded area signifies the standard deviation. A black dashed line, which coincides with the Derrida curve for linear networks with fixed in-degree of 1 in (a), highlights the critical threshold.}
    \label{fig:derrida}
\end{figure}

Linear functions give rise to networks with very unstable dynamics. Only linear networks with a fixed in-degree of $k=1$ exhibit critical dynamics~(Figure~\ref{fig:derrida}a). At a fixed in-degree of $2$, which for a random Boolean network yields critical behavior~\cite{derrida1986random,luque1997phase}, the corresponding linear network has already chaotic dynamics. The same is true for scale-free linear networks: networks with $\gamma=2.41$, corresponding to an average degree of $1$, operate at the critical edge, while networks with a higher average degree (i.e., $\gamma < 2.41$) are unstable~(Figure~\ref{fig:derrida}b). 

The Derrida value $D(F,1)$ depends only on the average degree. When the size of the perturbation is larger, the network topology matters. Linear networks with fixed even in-degree exhibit strange, non-monotonic Derrida curves. This is because the normalized average $c$-sensitivity for linear functions is non-monotonic. The Derrida curves for scale-free networks, on the other hand, are monotonically increasing and converge at $D(F,N) = N/2$. 

\begin{figure}
    \centering
    \includegraphics{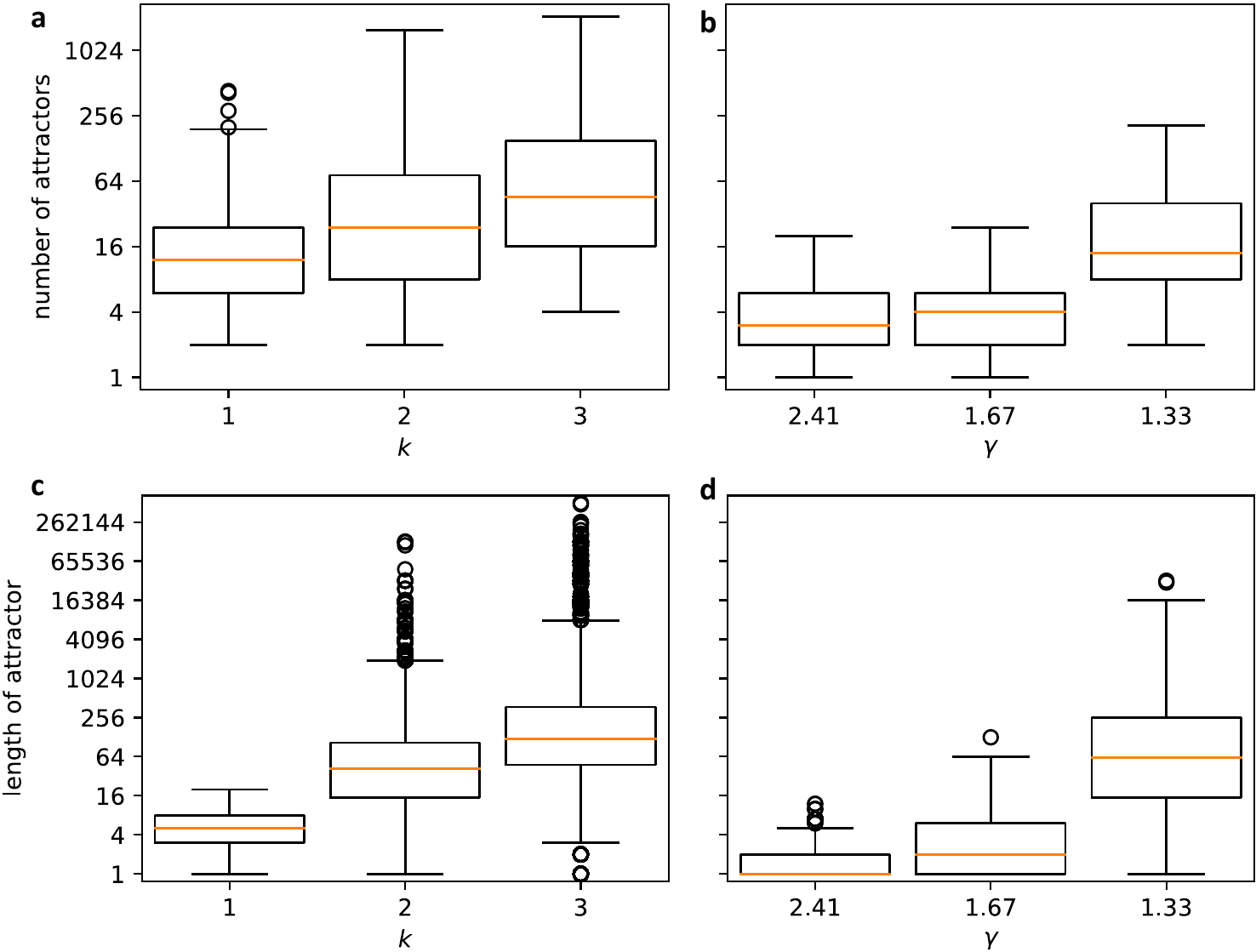}
    \caption{(a,b) Number and (c,d) length of attractors for linear networks with $N=20$ nodes and (a,c) a fixed in-degree $k$, and (b,d) scale-free out-degree distribution with a parameter $\gamma$ chosen to match the average degrees of the networks in (a). For each parameter value, $50$ random networks were generated. Orange lines depict the median, each box extends across the interquartile range (IQR), whiskers extend to the lowest data point still within 1.5 IQR of the lower quartile, and the highest data point still within 1.5 IQR of the upper quartile, and black circles show outliers. \new{The length of the whiskers was computed after log2-transformation of the data.}}
    \label{fig:attractors}
\end{figure}

Another characteristic of unstable networks is the presence of many and long attractors. For general Boolean networks, the computation of all attractors is time-consuming. However, for linear networks whose dynamics are fully described by their wiring diagram (or adjacency matrix), there exists a very fast method~\cite{RHT}, which we used here to compute length and number of all attractors for the networks already analyzed above. As the average in-degree of the networks increases, the number and length of attractors generally increases as well, a sign of increasingly chaotic dynamics~(Figure~\ref{fig:attractors}). Interestingly, scale-free networks possess both fewer and on average shorter attractors than networks with the same average degree but fixed in-degree. This may be due to the fact that the scale-free networks can possess some fixed nodes (that is, nodes without any inputs) while all nodes in fixed in-degree networks change over time, meaning the scale-free networks are often effectively smaller.

All these results highlight that the critical phase transition from ordered to chaotic dynamics already occurs at an average degree of $k_c=1$ for linear networks, much sooner than for (unbiased) random networks with a phase transition at $k_c=2$~\cite{luque1997phase}, or for networks governed by canalizing functions where $k_c>2$ and depends on the specific restrictions imposed on the canalizing functions~\cite{manicka2022effective}.

\section{Theoretical Results About Dynamics of Linear Systems}\label{sec:theoretical_results}
In this section we present several theoretical results about the dynamics of linear Boolean networks. Specifically, 
in Theorem~\ref{thm:att_states} we provide a formula for the proportion of attractor states (or equivalently, for the number of periodic states).
In Theorem~\ref{thm:fp_linear} we provide a formula for the average number of fixed points for linear systems. In Lemma~\ref{lem:inv}
and Theorem~\ref{thm:inv} we provide a formula for the number of bijective linear maps. Furthermore, we provide an estimate for the percentage of linear maps that are bijective.
We also note that while this paper is about linear \emph{Boolean} networks, we derive several results for a more general case, i.e. for polynomials over a finite field with two or more values.

\subsection{Proportion of Attractor States}\label{sec:attr_states}

Here we show that if we have a Boolean network with $n$ variables and $r$ is the dimension of the nilpotent component (see Appendix A for more details of the nilpotent component),
then the proportion of attractors in the linear case is simply $\disp\frac1{2^{r}}$. Alternatively, if $r$ is the dimension of the nilpotent component, then $n-r$ is the dimension of the bijective part and therefore the number of periodic states in the linear case  $\disp2^{n-k}$.
\begin{thm}\label{thm:att_states}
The proportion of states already present in the attractors is $\displaystyle\frac1{2^r}$, where $r$ is the dimension of the nilpotent component.
\end{thm}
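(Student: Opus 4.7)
The plan is to reduce the counting to a direct-sum decomposition of the state space under $M$ and then count periodic points summand by summand. I would invoke the Fitting decomposition of the linear operator $M$ acting on $\mathbb{F}_2^n$: there exist $M$-invariant subspaces $U$ and $V$ with $\mathbb{F}_2^n = U \oplus V$, where $M|_U$ is nilpotent and $M|_V$ is invertible. Under the theorem's hypothesis, $\dim U = r$ (the nilpotent component discussed in Appendix~A), so $\dim V = n-r$.

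Next I would identify attractor states with periodic states. Because $F$ is a deterministic function on a finite set, the terminal strongly connected components of the state space graph are precisely unions of cycles, so a state $x$ lies in an attractor if and only if $F^t(x) = x$ for some $t \geq 1$. Writing $x = u + v$ with $u \in U$ and $v \in V$ and using that both summands are $M$-invariant, I get $F^t(x) = M^t u + M^t v$ with $M^t u \in U$ and $M^t v \in V$, so periodicity is equivalent to $M^t u = u$ and $M^t v = v$ separately.

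The key observation is then twofold. On $V$, the restriction $M|_V$ is a bijection of a finite set, hence every $v \in V$ is periodic. On $U$, the restriction $M|_U$ is nilpotent, so $M^k|_U = 0$ for some $k$; combined with $M^t u = u$ this forces $u = M^{kt} u = 0$. Therefore the set of periodic states equals $\{0\} \oplus V$, which has cardinality $2^{n-r}$, and dividing by the total number of states $2^n$ yields the claimed proportion $1/2^r$. The same argument goes through verbatim over any finite field $\mathbb{F}_q$ to give a proportion $1/q^r$, matching the broader generality mentioned in the section header.

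The main obstacle I anticipate is not technical but expository: I need to confirm that the ``nilpotent component'' set up in Appendix~A is exactly the Fitting summand $U$ (rather than, say, $\ker M^j$ for some $j$ that need not admit an invariant complement), so that the parameter $r$ appearing in the theorem statement coincides with $\dim U$ above. Once that identification is nailed down, the counting step is immediate; no delicate estimate or auxiliary lemma is required beyond the Fitting decomposition itself.
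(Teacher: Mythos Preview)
Your proposal is correct and follows essentially the same route as the paper: both invoke the Fitting decomposition $\mathbb{F}_2^n = N \oplus B$ into nilpotent and bijective $M$-invariant summands, and both conclude that the periodic states are exactly those with vanishing nilpotent component, giving the proportion $1/2^r$. Your write-up is in fact more explicit than the paper's in justifying why $v_N=0$ characterizes periodicity (via $M^{kt}u=0$ on the nilpotent side and bijectivity on the other), so the only remaining task is the expository one you already flagged.
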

\begin{proof}
For every linear operator on $\mathbb F_2^n$,  there is a decomposition $\mathbb F_2^n=N\oplus B$ where $N$ is the nilpotent part and $B$ is the bijective part.
Thus every vector $v$ decomposes as $v_B+v_N$ uniquely where $v_B\in B$ and $v_N\in N$. The vectors $v=v_B+v_N$ where $v_N=0$ are already part of an attractor. Thus if $N$ has dimension $r$ over $\mathbb F_2$, then $N$ has a total of $2^r$ vectors and only one of them is zero. Hence the proportion is simply $\displaystyle\frac1{2^r}$. This completes the proof.
\end{proof}

\begin{cor}
If $r$ is the dimension of the nilpotent part, the number of periodic states (i.e., states that are part of attractors) of a linear Boolean network is $2^{n-r}$.
\end{cor}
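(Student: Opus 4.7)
The plan is to deduce this corollary immediately from Theorem~\ref{thm:att_states}. That theorem shows that the fraction of states in $\mathbb F_2^n$ that lie on attractors equals $1/2^r$, where $r=\dim N$ in the decomposition $\mathbb F_2^n = N\oplus B$. Since the total number of states is $|\mathbb F_2^n|=2^n$, multiplying the proportion by the size of the state space gives $2^n\cdot\tfrac{1}{2^r}=2^{n-r}$. So the proof is essentially a one-line conversion from a proportion to a count.

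If a self-contained argument is preferred, I would instead note that every $v\in\mathbb F_2^n$ decomposes uniquely as $v=v_B+v_N$ with $v_B\in B$ and $v_N\in N$, and that $F$ acts bijectively on $B$ and nilpotently on $N$. Hence $v$ is periodic if and only if $v_N=0$ (otherwise, repeated application of $F$ drives the $N$-component to $0$ and the trajectory never revisits $v$). The map $v\mapsto v_B$ therefore restricts to a bijection between periodic states and $B$, and $|B|=2^{\dim B}=2^{n-r}$. There is no real obstacle here; the only subtlety is invoking the nilpotent-plus-bijective decomposition of a linear operator on $\mathbb F_2^n$, which is already in hand from the proof of Theorem~\ref{thm:att_states}.
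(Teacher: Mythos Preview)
Your proposal is correct and matches the paper's proof: the paper argues, as in your second paragraph, that vectors with $v_N=0$ are exactly the periodic states, so the count is $|B|=2^{n-r}$. Your first one-line derivation (multiply the proportion $1/2^r$ by $2^n$) is an equally valid and even slightly quicker route to the same conclusion.
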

\begin{proof}
As in the proof of Theorem~\ref{thm:att_states}, vectors $v=v_B+v_N$ where $v_N=0$ are already part of an attractor. Thus if $N$ has dimension $r$ over $\mathbb F_2$, then $n-r$ is the dimension of the bijective part $B$.
Thus, $B$ has a total of $2^{n-r}$ vectors and they are all part of attractors. Hence the number of periodic states is $\displaystyle2^{n-r}$.
\end{proof}

\subsection{Fixed Points of Linear Maps}\label{sec:fps}

\new{Even though we focus on synchronously updated linear Boolean network, we note that the results of this section do not depend on the updating schedule. That is, the average number of fixed points of linear Boolean networks will remain the same whether these systems are update synchronously or stochastically (e.g., asynchronous systems \cite{albert2015signaling}, SDDS~\cite{murrugarra2012modeling}, PBN~\cite{shmulevich2002probabilistic}).}

We now derive the expected number $\fix(M)$ of fixed points of a linear map $T(v)=Mv$ where $M\in(\bb F_q)^{n\times n}$ is the space of all $n\times n$ matrices over $\bb F_q$. Since the number of fixed points is simply the size of the null space of the shifted matrix $M-I$ it is computed by the summation
\bea{\E(\fix(M))&=&\frac1{\la(\bb F_q)^{n\times n}\ra}\sum_{M\in(\bb F_q)^{n\times n}}\#\{v:(M-I)v=0\}\nonumber\\&=&\frac1{\la(\bb F_q)^{n\times n}\ra}\sum_{M\in(\bb F_q)^{n\times n}}\#\{v:Mv=0\}.\label{expected_fixed}}
Towards the computation of the summation in \eqref{expected_fixed}, we introduce some notation. Thus the problem of counting fixed-points reduces to counting vectors in the null space.
\begin{dfn}
For any matrix $M\in(\bb F_q)^{n\times n}$ over $\bb F_q$ and $v\in(\bb F_q)^n$ a column vector, define
\be{\lt M,v\rt=\begin{cases}1&\text{if }Mv=0,\\0&\text{if }Mv\ne0.\end{cases}\label{nullsp_vect_counter}}
\end{dfn}
\begin{lem}
For all $v\in (\bb F_q)^n$, we have
\be{\sum_{M\in(\bb F_q)^{n\times n}}\lt M,v\rt=\begin{cases}q^{n^2}&\text{if }v=0,\\q^{n^2-n}&\text{if }v\ne0.\end{cases}.\label{matrices_killing_v}}
\end{lem}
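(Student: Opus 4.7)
The plan is to split into the two cases and interpret the sum as counting the number of matrices $M\in(\bb F_q)^{n\times n}$ that annihilate $v$, i.e., $\#\{M : Mv=0\}$. The case $v=0$ is immediate, since then every matrix satisfies $Mv=0$, giving the total count $|(\bb F_q)^{n\times n}|=q^{n^2}$.

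For $v\ne 0$, the key observation is that the condition $Mv=0$ decouples across the rows of $M$: writing $M$ in terms of its row vectors $r_1,\ldots,r_n\in(\bb F_q)^n$, we have $Mv=0$ if and only if $r_i\cdot v=0$ for each $i=1,\ldots,n$. Since $v\ne 0$, the set $\{r\in(\bb F_q)^n : r\cdot v=0\}$ is the kernel of the nonzero linear functional $r\mapsto r\cdot v$, hence a hyperplane of dimension $n-1$ containing exactly $q^{n-1}$ vectors. Because the rows can be chosen independently, the total count of admissible matrices is $(q^{n-1})^n=q^{n^2-n}$, as claimed.

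I do not anticipate any real obstacle here; the whole argument is a one-line dimension count once the sum is reinterpreted as a cardinality and one notices that the row-constraints are independent. If desired, one can alternatively invoke transitivity of $GL_n(\bb F_q)$ on nonzero vectors to reduce to the case $v=e_1$, where $Mv=0$ forces the first column of $M$ to be zero and leaves the remaining $n(n-1)$ entries free, again yielding $q^{n^2-n}$; this gives a coordinate-based sanity check on the row-decoupling argument.
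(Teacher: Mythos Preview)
Your proposal is correct and follows essentially the same argument as the paper: both reinterpret the sum as the number of matrices annihilating $v$, dispose of $v=0$ trivially, and for $v\ne 0$ count row-by-row using that each row must lie in the $(n-1)$-dimensional hyperplane orthogonal to $v$, yielding $(q^{n-1})^n=q^{n^2-n}$. Your additional $GL_n$-transitivity / first-column remark is a nice sanity check but not needed.
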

\begin{proof}
From \eqref{nullsp_vect_counter} we see that the LHS of \eqref{matrices_killing_v} counts the number of matrices in whose null space $v$ lies. For every such matrix $M$, each row must lie in the orthogonal complement of $v$ using the standard inner product.

Case 1: If $v=0$, then $v$ lies in the null space, no matter the $M$, so the summand is a 1 for every matrix. The summation is therefore the total number of matrices which is $q^{n^2}$.

Case 2: If $v\ne0$, then $v$ lies in the null space of $M$, if and only if each row of $M$ is orthogonal to $v$. Thus each of the $n$ rows has $q^{n-1}$ possibilities giving $q^{n^2-n}$ possibilities.
\end{proof}

Hereafter it will be understood that $M$ runs over all matrices in $(\bb F_q)^{n\times n}$ and $v$ all vectors in $(\bb F_q)^n$. From \eqref{expected_fixed} and \eqref{nullsp_vect_counter}, since $\la(\bb F_q)^{n\times n}\ra=q^{n^2}$, we have
\be{\E(\fix(M))=\frac1{q^{n^2}}\sum_M\sum_v\lt M,v\rt\label{expected_fixed_caveman}.}

We are now in a position to compute the expected number of fixed points of a linear Boolean network.

\begin{thm}\label{thm:fp_linear}
Let $M\in(\bb F_q)^{n\times n}$. Then, \be{\E(\fix(M))=2-q^{-n}.\label{expected_fixed_exact}}
\end{thm}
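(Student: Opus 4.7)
The plan is to combine the setup equation \eqref{expected_fixed_caveman} with the lemma counting matrices that annihilate a given vector, by swapping the order of summation. Starting from
\[\E(\fix(M))=\frac1{q^{n^2}}\sum_M\sum_v\lt M,v\rt,\]
I would first interchange the two sums to obtain
\[\E(\fix(M))=\frac1{q^{n^2}}\sum_v\Big(\sum_M\lt M,v\rt\Big),\]
so that the inner sum is exactly the quantity evaluated by equation \eqref{matrices_killing_v}.

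Next I would split the outer sum over $v\in(\bb F_q)^n$ according to whether $v=0$ or $v\ne 0$. There is exactly one zero vector, contributing $q^{n^2}$ to the inner sum, and $q^n-1$ nonzero vectors, each contributing $q^{n^2-n}$. Thus
\[\sum_v\sum_M\lt M,v\rt = q^{n^2}+(q^n-1)\,q^{n^2-n}.\]

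Dividing by $q^{n^2}$ yields
\[\E(\fix(M)) = 1+(q^n-1)\,q^{-n} = 1+1-q^{-n}=2-q^{-n},\]
which is precisely \eqref{expected_fixed_exact}. There is no genuine obstacle here, since all the combinatorial work has been packaged into the preceding lemma; the only care needed is to justify the Fubini-style swap (a finite sum, so this is automatic) and to correctly account for the single vector $v=0$ separately from the $q^n-1$ nonzero vectors.
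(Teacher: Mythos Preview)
Your proof is correct and follows essentially the same approach as the paper: swap the order of summation in \eqref{expected_fixed_caveman}, apply \eqref{matrices_killing_v} to the inner sum with the $v=0$ and $v\ne 0$ cases handled separately, and simplify. The only difference is cosmetic---the paper carries the computation in a single displayed chain rather than stating the intermediate totals separately.
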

\begin{proof}
From \eqref{expected_fixed_caveman} we have
\bea{\E(\fix(M))&=&\frac1{q^{n^2}}\sum_M\sum_v\lt M,v\rt=\frac1{q^{n^2}}\sum_v\sum_M\lt M,v\rt\nonumber\\&=&\frac1{q^{n^2}}\lr\sum_M\lt M,0\rt+\sum_{v\ne0}\sum_M\lt M,v\rt\rr\nonumber\\&=&\frac1{q^{n^2}}\lr q^{n^2}+\sum_{v\ne0}q^{n^2-n}\rr\quad\quad\text{from }\eqref{matrices_killing_v}\nonumber\\&=&\frac1{q^{n^2}}\lr q^{n^2}+(q^n-1)q^{n^2-n}\rr\quad\text{since }\la(\bb F_q)^n-\{0\}\ra=q^n-1,\label{E_after_summation_switch}}
which simplifies to the RHS.
\end{proof}
Since the zero vector is always a fixed point, we can conclude that on average the number of non-zero vectors that are fixed is one. 

\subsection{Expected Dimension of the Space of Fixed Points}

In this section we compute the average dimension of the fixed-point space by computing the average null space dimension. But first, we derive some preliminary results.
\begin{thm}
The number of $n\times n$ matrices with rank $d$ is 
\be{q^{d(d-1)/2}\prod_{i=0}^{d-1}\frac{(q^{n-i}-1)^2}{q^{i+1}-1}\label{matrices_with_d_rank}}\label{matrices_with_d_rank_thm}
\end{thm}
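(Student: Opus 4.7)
The plan is to parametrize each rank-$d$ linear map $T\colon \bb F_q^n \to \bb F_q^n$ by three pieces of data: its kernel $K = \ker T$ (of dimension $n-d$), its image $V$ (of dimension $d$), and the induced isomorphism $\bar T \colon \bb F_q^n/K \to V$ supplied by the first isomorphism theorem. Conversely, every such triple $(K, V, \bar T)$ determines a unique rank-$d$ linear endomorphism via the composition $\bb F_q^n \to \bb F_q^n/K \to V \hookrightarrow \bb F_q^n$. Thus the total count factors as
\[
\#\{(n-d)\text{-dim subspaces of } \bb F_q^n\} \cdot \#\{d\text{-dim subspaces of } \bb F_q^n\} \cdot |GL_d(\bb F_q)|.
\]

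Next I would evaluate each factor using classical formulas. The number of $d$-dimensional (equivalently $(n-d)$-dimensional) subspaces of $\bb F_q^n$ is the Gaussian binomial coefficient $\binom{n}{d}_q = \prod_{i=0}^{d-1}(q^{n-i}-1)/(q^{i+1}-1)$, which one derives by counting ordered $d$-tuples of linearly independent vectors in $\bb F_q^n$ (there are $\prod_{i=0}^{d-1}(q^n - q^i)$ of them) and dividing by the number $|GL_d(\bb F_q)|$ of ordered bases per subspace. The same ordered-basis count with $n$ replaced by $d$ yields $|GL_d(\bb F_q)| = \prod_{i=0}^{d-1}(q^d - q^i) = q^{d(d-1)/2}\prod_{i=0}^{d-1}(q^{i+1}-1)$, after pulling out $q^i$ from each factor.

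The remaining step is algebraic bookkeeping: substituting the three counts and observing that a single factor of $\prod_{i=0}^{d-1}(q^{i+1}-1)$ from the squared Gaussian-binomial denominator cancels against the identical product appearing in $|GL_d(\bb F_q)|$. What remains is precisely $q^{d(d-1)/2}\prod_{i=0}^{d-1}(q^{n-i}-1)^2/(q^{i+1}-1)$, matching the claimed expression \eqref{matrices_with_d_rank}.

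The main obstacle is purely notational: verifying that the map \{rank-$d$ matrices\} $\to$ \{triples $(K, V, \bar T)$\} really is a bijection (routine, from the first isomorphism theorem), and keeping track of which product runs over which indices during the simplification. A cleaner alternative would use the rank-factorization viewpoint: every rank-$d$ matrix $M$ admits a factorization $M = AB$ with $A \in \bb F_q^{n \times d}$ and $B \in \bb F_q^{d \times n}$ both of rank $d$, and two such factorizations $(A,B)$, $(A',B')$ yield the same $M$ if and only if $A' = AP$, $B' = P^{-1}B$ for some $P \in GL_d(\bb F_q)$. Counting ordered $d$-tuples of linearly independent vectors in $\bb F_q^n$ for both $A$ and $B$, then dividing by $|GL_d(\bb F_q)|$, recovers the same formula.
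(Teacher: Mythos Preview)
Your argument is correct: the bijection between rank-$d$ endomorphisms and triples $(K,V,\bar T)$ is sound, the three counts are the standard ones, and the cancellation you describe yields exactly \eqref{matrices_with_d_rank}. The alternative rank-factorization argument you sketch is equally valid and leads to the same product.

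The paper, however, does not prove this result at all: its entire ``proof'' is a one-line citation of a theorem in the \emph{Handbook of Finite Fields} for the count of $m\times n$ matrices of rank $k$, specialized to $m=n$ and $k=d$. So your proposal is not a variant of the paper's argument but a genuine replacement for an external reference. What you gain is self-containment and some insight into \emph{why} the formula has the shape it does (the squared factor from choosing kernel and image separately, the $q^{d(d-1)/2}$ from $|GL_d|$); what the paper's approach gains is brevity, at the cost of relying on a source the reader may not have at hand.
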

\begin{proof}
We specialise to the case $m=n$ in \cite[Theorem 13.2.5]{handbook2013finitefields} and set $k=d$.
\end{proof}

\begin{thm}
The expected dimension of the null space of a matrix $M\in(\bb F_q)^{n\times n}$ is given by
\be{
f_q(n) = \frac1{q^{n^2}}\sum_{d=0}^n(n-d)\cdot q^{d(d-1)/2}\prod_{i=0}^{d-1}\frac{(q^{n-i}-1)^2}{q^{i+1}-1}\label{expected_dimension}}
\end{thm}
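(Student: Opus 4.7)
The plan is to exploit the rank--nullity theorem and partition the space of all $n\times n$ matrices over $\bb F_q$ according to their rank. Since $\dim(\ker M) = n - \mathrm{rank}(M)$, once we know how many matrices have each possible rank, the expected value is a straightforward weighted average.

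First I would write the expectation as
\ben{f_q(n) = \E(\dim\ker M) = \frac{1}{q^{n^2}}\sum_{M\in(\bb F_q)^{n\times n}} \dim\ker M,}
then group the summands by rank. Letting $N_d$ denote the number of matrices in $(\bb F_q)^{n\times n}$ of rank exactly $d$, and noting $\dim\ker M = n - d$ on this subset, the sum becomes
\ben{f_q(n) = \frac{1}{q^{n^2}}\sum_{d=0}^n (n-d)\,N_d.}
Finally, I would substitute the closed-form expression for $N_d$ supplied by Theorem~\ref{matrices_with_d_rank_thm}, namely $N_d = q^{d(d-1)/2}\prod_{i=0}^{d-1}\frac{(q^{n-i}-1)^2}{q^{i+1}-1}$, and this yields the claimed formula.

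There is really no main obstacle here, since the heavy combinatorial lifting has already been done in Theorem~\ref{matrices_with_d_rank_thm}; the proof amounts to a one-line application of rank--nullity together with partitioning by rank. If any step deserves mild care, it is simply checking the boundary cases $d=0$ (where the empty product is $1$, giving the single zero matrix with nullity $n$) and $d=n$ (giving invertible matrices with nullity $0$, and hence contributing nothing to the sum), to confirm the formula accounts for these correctly.
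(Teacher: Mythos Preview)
Your proposal is correct and follows essentially the same argument as the paper: partition $(\bb F_q)^{n\times n}$ by rank, apply rank--nullity to replace $\dim\ker M$ by $n-d$, and invoke Theorem~\ref{matrices_with_d_rank_thm} for the count $N_d$. The paper's proof is the same one-liner, so there is nothing to add.
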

\begin{proof}
Summing over the matrices of all ranks $d$ and dividing by $q^{n^2}$ - the total number of matrices - the theorem follows from Theorem \ref{matrices_with_d_rank_thm} and the fact that the respective null-space dimensions (or nullities) are $n-d$ by rank-nullity theorem.
\end{proof}

In the Boolean case ($q=2$), we have 

\begin{align}\label{tab:exp_nullsp}
    f_2(n=1) &=0.5 \nonumber \\
    f_2(n=2) &=0.6875 \nonumber \\
    f_2(n=3) &=0.771484 \nonumber  \\
    f_2(n=4) &=0.811447 \\
    f_2(n=5) &=0.830962 \nonumber  \\
    &\vdots \nonumber \\
    f_2(n=32) &=0.850179 \nonumber 
\end{align}

It can be conjectured that $1$ is closest to the average dimension of the null space. This essentially corroborates the earlier result from Theorem~\ref{thm:fp_linear}: a linear Boolean network possesses approximately $2^1=2$ fixed points, which means a $1$-dimensional space of fixed points.

\subsection{Modal Dimension of the Fixed-point Space for the Boolean Case}\label{sec:dim_fps}
In this section we quantify the proportion of $n\times n$ matrices with a $1$-dimensional null-space.

\begin{nota}
The proportion of matrices in $(\bb F_2)^{n\times n}$ with $d$-dimensional null space is denoted  by $P(d,n)$.
\end{nota}
\begin{lem}\label{lem:inv}
The proportion of invertible matrices in $(\bb F_2)^{n\times n}$ is
\be{P(0,n)=\prod_{i=1}^n(1-2^{-i})\label{invertibles'_proportion}}
\end{lem}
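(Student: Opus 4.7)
The plan is to enumerate the invertible $n \times n$ matrices over $\mathbb{F}_2$ column-by-column and then divide by $2^{n^2}$, the total size of $(\mathbb{F}_2)^{n \times n}$. A matrix is invertible precisely when its columns form a basis of $\mathbb{F}_2^n$, so I would build admissible matrices one column at a time. The first column may be any nonzero vector, contributing $2^n - 1$ choices, and inductively the $i$-th column must avoid the span of the previously chosen $i-1$ columns, leaving $2^n - 2^{i-1}$ admissible vectors. Taking the product over $i = 1, \ldots, n$ gives the count of invertible matrices as $\prod_{i=1}^{n}(2^n - 2^{i-1})$.

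The remaining work is algebraic simplification to reach the stated form. I would factor each term as $2^n - 2^{i-1} = 2^{i-1}(2^{n-i+1} - 1)$, splitting the product into $\prod_{i=1}^n 2^{i-1} = 2^{n(n-1)/2}$ together with $\prod_{j=1}^n (2^j - 1)$ after reindexing $j = n - i + 1$. Dividing by $2^{n^2}$ leaves an overall factor of $2^{n(n-1)/2 - n^2} = 2^{-n(n+1)/2}$, which I would distribute into the product using $\sum_{j=1}^n j = n(n+1)/2$, pairing each $(2^j - 1)$ with $2^{-j}$ to obtain $\prod_{j=1}^n (1 - 2^{-j})$, as claimed.

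Because the argument is a standard column-counting plus bookkeeping, no serious obstacle arises; the only step warranting care is keeping the reindexing consistent and verifying that the exponents of $2$ balance when distributed into the product. An alternative route would be to specialize Theorem~\ref{matrices_with_d_rank_thm} at $d = n$ and $q = 2$ and simplify the resulting expression, but the direct approach produces the target form more transparently and avoids a lengthy manipulation of the ratio of $q$-factorials.
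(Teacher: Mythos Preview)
Your argument is correct. It differs from the paper's proof only in presentation: the paper obtains the count $\prod_{i=0}^{n-1}(2^n-2^i)$ by specializing the cited rank-counting formula (Theorem~\ref{matrices_with_d_rank_thm}) at $q=2$, $d=n$, whereas you re-derive that same product from scratch via the standard column-by-column independence count. For the simplification step, the paper takes a shorter path than your factor-and-reindex manipulation: it simply divides each of the $n$ factors $(2^n-2^i)$ by $2^n$ to get $(1-2^{i-n})$, which after the substitution $j=n-i$ is already $\prod_{j=1}^n(1-2^{-j})$. Your route is more self-contained; the paper's is terser but leans on the earlier citation.
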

\begin{proof}
Setting $q=2$ and $d=n$ in \eqref{matrices_with_d_rank} we obtain
\be{\prod_{i=0}^{n-1}(2^n-2^i)\label{invertible_matrices_over_Z2}}
We have a total of $2^{n^2}$ matrices. So instead of dividing by $2^{n^2}$ at once, we divide each of the $n$ product terms in the RHS of \eqref{matrices_with_d_rank} by $2^n$ and obtain the result.
\end{proof}
\begin{lem}
For $0<x<1$, we have
\be{-\ln(1-x)=x+\frac{x^2}2+x^3\int_0^1\frac{t^2dt}{1-xt}\label{logarithm_identity}}
\end{lem}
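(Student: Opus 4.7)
The plan is to verify the identity by matching the right-hand side against the known Taylor expansion $-\ln(1-x) = \sum_{n=1}^\infty \frac{x^n}{n}$, which is valid for $0 < x < 1$. So the task reduces to showing that the two terms $x + \frac{x^2}{2}$ together with $x^3 \int_0^1 \frac{t^2\, dt}{1-xt}$ reassemble this series.

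First I would fix $x \in (0,1)$ and expand the integrand using the geometric series. For $t \in [0,1]$ we have $xt \in [0,x] \subset [0,1)$, so
$$\frac{1}{1-xt} = \sum_{k=0}^\infty (xt)^k,$$
and because $(xt)^k \le x^k$ on $[0,1]$ with $\sum x^k$ convergent, the series converges uniformly in $t$. Multiplying by $t^2$ preserves uniform convergence, so termwise integration is justified and gives
$$\int_0^1 \frac{t^2}{1-xt}\,dt = \sum_{k=0}^\infty x^k \int_0^1 t^{k+2}\,dt = \sum_{k=0}^\infty \frac{x^k}{k+3}.$$

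Next, multiplying by $x^3$ and reindexing with $n = k+3$ yields
$$x^3 \int_0^1 \frac{t^2\,dt}{1-xt} = \sum_{n=3}^\infty \frac{x^n}{n}.$$
Adding the leading terms $x$ and $\frac{x^2}{2}$ recovers the full Taylor series for $-\ln(1-x)$, which establishes the identity.

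There is no serious obstacle here; the only point requiring any care is the termwise integration, which is handled by the uniform convergence bound $(xt)^k \le x^k$ on $[0,1]$. An alternative route via Taylor's theorem with integral remainder applied to $f(y)=-\ln(1-y)$ at order three, followed by the substitution $s = xt$, also proves an identity of this shape, but it produces an integrand of the form $\frac{(1-t)^2}{(1-xt)^3}$ rather than $\frac{t^2}{1-xt}$, so hitting the stated form precisely would require an extra manipulation. The geometric-series method is therefore the most direct route.
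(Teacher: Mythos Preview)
Your proof is correct, but the paper takes a more elementary route that avoids infinite series entirely. The paper starts from the polynomial-division identity
\[
\frac{1}{1-s}=1+s+\frac{s^2}{1-s},
\]
integrates both sides from $0$ to $x$ to obtain
\[
-\ln(1-x)=x+\frac{x^2}{2}+\int_0^x\frac{s^2\,ds}{1-s},
\]
and then substitutes $s=xt$ (so $ds=x\,dt$) to land directly on the stated integral. This argument uses only elementary calculus and an algebraic identity, with no appeal to the Taylor series of $-\ln(1-x)$ or to uniform convergence for termwise integration. Your approach, by contrast, works backwards from the known series and reconstructs it via a geometric expansion of the integrand; this is perfectly valid and has the minor advantage of making transparent why the remainder term captures exactly the tail $\sum_{n\ge 3} x^n/n$, but it imports more machinery than the paper needs.
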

\begin{proof}
\textit{(due to Gergő Nemes)} By division algorithm we have
\be{\frac1{1-s}=1+s+\frac{s^2}{1-s}\label{rational_function_identity}}
Integrating both sides of \eqref{rational_function_identity} from $0$ to $x$, we get
\be{-\ln(1-x)=x+\frac{x^2}2+\int_0^x\frac{s^2ds}{1-s}\label{integrate_both_sides}}
Now we do a change of variable $s=xt$, so that $ds=x\cdot dt$. The equation \eqref{integrate_both_sides} now becomes
\be{-\ln(1-x)=x+\frac{x^2}2+x^3\int_0^1\frac{t^2dt}{1-xt}}
\end{proof}
\begin{lem}
For $0<x\le\frac12$, we have
\be{-\ln(1-x)\le x+\frac{x^2}2+(8\ln 2-5)x^3\label{bound_on_ln_1-x}}
\end{lem}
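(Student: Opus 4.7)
The plan is to use the integral identity from the preceding lemma and to bound the integrand by its value at the largest allowed $x$. Concretely, the preceding lemma gives
$$-\ln(1-x) = x + \frac{x^2}{2} + x^3 \int_0^1 \frac{t^2\,dt}{1-xt},$$
so the claim reduces to showing that $\int_0^1 \frac{t^2\,dt}{1-xt} \le 8\ln 2 - 5$ for every $x \in (0,1/2]$.

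For fixed $t \in (0,1)$, the function $x \mapsto \frac{t^2}{1-xt}$ is strictly increasing on $(0,1)$, because the denominator decreases as $x$ grows. Hence the integral is maximized over $x \in (0, 1/2]$ at $x = 1/2$, and I would monotonically bound
$$\int_0^1 \frac{t^2\,dt}{1-xt} \le \int_0^1 \frac{t^2\,dt}{1-t/2} = \int_0^1 \frac{2t^2\,dt}{2-t}.$$

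The remaining step is an explicit evaluation of the right-hand integral by polynomial long division: writing $\frac{2t^2}{2-t} = -2t - 4 + \frac{8}{2-t}$, integrating term-by-term yields $-1 - 4 - 8\ln 1 + 8\ln 2 = 8\ln 2 - 5$. Multiplying the resulting bound by $x^3$ and combining with the identity completes the proof. There is really no obstacle here; the only nontrivial point is recognizing that the constant $8\ln 2 - 5$ is not arbitrary but is exactly the value of the integral at the endpoint $x=1/2$, which also shows the bound is sharp at $x = 1/2$.
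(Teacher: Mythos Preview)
Your proof is correct and follows essentially the same approach as the paper: both use the integral identity from the preceding lemma and bound the integrand $\frac{t^2}{1-xt}$ by its value at $x=\tfrac12$, then identify the resulting integral as $8\ln 2-5$. You supply more detail (the monotonicity justification and the explicit long-division computation), but the argument is the same.
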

\begin{proof}
From \eqref{logarithm_identity}, for $x\le\frac12$ we have
\bean{-\ln(1-x)&=&x+\frac{x^2}2+x^3\int_0^1\frac{t^2dt}{1-xt}\\&\le&x+\frac{x^2}2+x^3\int_0^1\frac{t^2dt}{1-t/2}\\&=&x+\frac{x^2}2+(8\ln 2-5)x^3}
\end{proof}
\begin{lem}
For all $n$ and $q=2$ we have \be{P(0,n)>2^{-8/7}e^{-19/42}\label{porportion_of_invertibles}}
\end{lem}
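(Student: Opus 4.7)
The plan is to take logarithms in Lemma~\ref{lem:inv}, apply the bound from the preceding lemma term-by-term, and then sum the resulting geometric series.

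First, from Lemma~\ref{lem:inv} we have $P(0,n) = \prod_{i=1}^n (1-2^{-i})$, so
\[
-\ln P(0,n) = \sum_{i=1}^n -\ln(1-2^{-i}).
\]
For each $i \ge 1$, we have $x_i := 2^{-i} \le \tfrac12$, so the bound \eqref{bound_on_ln_1-x} applies and gives
\[
-\ln(1-2^{-i}) \le 2^{-i} + \tfrac12\, 2^{-2i} + (8\ln 2 - 5)\, 2^{-3i}.
\]
Summing over $i=1,\dots,n$ and then bounding above by the corresponding infinite geometric series yields
\[
-\ln P(0,n) \le \sum_{i=1}^\infty 2^{-i} + \tfrac12 \sum_{i=1}^\infty 4^{-i} + (8\ln 2 - 5)\sum_{i=1}^\infty 8^{-i}.
\]

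Next I would evaluate the three geometric sums as $1$, $\tfrac16$, and $\tfrac17$ respectively, so the right-hand side becomes
\[
1 + \tfrac{1}{6} + \tfrac{8\ln 2 - 5}{7} = \tfrac{19}{42} + \tfrac{8}{7}\ln 2.
\]
Exponentiating gives $P(0,n) \ge 2^{-8/7}e^{-19/42}$. To upgrade to strict inequality, observe that the tail of the geometric series beyond index $n$ is strictly positive, so the finite sum is strictly less than the infinite sum; this yields the desired strict bound $P(0,n) > 2^{-8/7}e^{-19/42}$.

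There is no real obstacle here; the only care needed is to verify that the hypothesis $x \le \tfrac12$ in \eqref{bound_on_ln_1-x} is satisfied at every term (which it is, since the largest value $x_1 = \tfrac12$ is exactly the boundary case), and to track the arithmetic $1 + \tfrac16 - \tfrac57 = \tfrac{42+7-30}{42} = \tfrac{19}{42}$ correctly so as to match the stated constants. The strictness argument via truncation of the geometric tail is what converts the routine bound into the claim as written.
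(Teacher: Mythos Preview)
Your proof is correct and follows essentially the same approach as the paper: take logarithms, apply the bound \eqref{bound_on_ln_1-x} termwise, sum the resulting geometric series, and exponentiate. The only cosmetic difference is that the paper keeps the finite partial sums and factors out $(1-2^{-3n})$ to obtain $P(0,n)\ge (2^{-8/7}e^{-19/42})^{1-2^{-3n}}$ before passing to the strict inequality, whereas you bound directly by the full infinite series and invoke the positive tail for strictness; both routes yield the same conclusion.
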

\begin{proof}
From \ref{invertibles'_proportion}, we have
\bea{-\ln P(0,n)&=&\sum_{i=1}^n\ln(1-2^{-i})\quad\quad\text{(from }\eqref{invertibles'_proportion}\nonumber\\&\le&\sum_{i=1}^n2^{-i}+\frac12\sum_{i=1}^n2^{-2i}+(8\ln2-5)\sum_{i=1}^n2^{-3i}\nonumber\\&&\text{(from }\eqref{bound_on_ln_1-x}\text{ since }q^{-i}\le\frac12)\nonumber\\&=&(1-2^{-n})+\frac{(1-2^{-2n})}6+(8\ln2-5)\frac{(1-2^{-3n})}7\nonumber\\&\le&(1-2^{-3n})\lr1+\frac16+\frac{8\ln2-5}7\rr\label{negative_log_bound}}
From \eqref{negative_log_bound}, after changing sign and exponentiating both sides we get
\be{P(0,n)\ge\lr 2^{-8/7}e^{-19/42}\rr^{1-2^{-3n}}>2^{-8/7}e^{-19/42}\label{final_P(0,n)_bound}}
\end{proof}
\begin{lem}
\be{P(1,n)=2(1-2^{-n})P(0,n)\label{proportion_relation}}
\end{lem}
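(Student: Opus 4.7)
The plan is to read off both $P(0,n)$ and $P(1,n)$ from Theorem~\ref{matrices_with_d_rank_thm} with $q=2$ and take their ratio. By the rank--nullity theorem, a matrix in $(\bb F_2)^{n\times n}$ has $d$-dimensional null space if and only if it has rank $n-d$. So $P(0,n)$ corresponds to setting $d=n$ in \eqref{matrices_with_d_rank} (and dividing by $2^{n^2}$), while $P(1,n)$ corresponds to setting $d=n-1$.

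Next, I would form the quotient $P(1,n)/P(0,n)$, in which the $2^{n^2}$ normalizations cancel. The quotient splits cleanly into two pieces. The prefactor piece is
\[
\frac{2^{(n-1)(n-2)/2}}{2^{n(n-1)/2}} \;=\; 2^{(n-1)[(n-2)-n]/2} \;=\; 2^{1-n}.
\]
The product piece is almost telescoping: the $d=n-1$ product is exactly the $d=n$ product with the $i=n-1$ factor removed. That factor is $\frac{(2^{n-(n-1)}-1)^2}{2^{n}-1}=\frac{1}{2^n-1}$, so omitting it multiplies the product by $2^n-1$.

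Combining the two pieces gives
\[
\frac{P(1,n)}{P(0,n)} \;=\; 2^{1-n}(2^n-1) \;=\; 2(1-2^{-n}),
\]
which rearranges to the claimed identity. There is no real obstacle here, only careful bookkeeping to ensure that precisely one factor (the $i=n-1$ one) drops when passing from $d=n$ to $d=n-1$, and that the exponent arithmetic on the prefactor is handled correctly; a brief sanity check at $n=1,2$ (where $P(0,1)=1/2,\ P(1,1)=1/2$ and $P(0,2)=3/8,\ P(1,2)=9/16$) would confirm the formula before presenting it.
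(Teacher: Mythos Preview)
Your proposal is correct and follows essentially the same approach as the paper: both start from Theorem~\ref{matrices_with_d_rank_thm} with $q=2$, specialize to $d=n$ and $d=n-1$, and compare. The only cosmetic difference is that the paper first simplifies $P(1,n)$ to the closed form $(1-2^{-n})\prod_{i=2}^n(1-2^{-i})$ and then matches it against the already-derived $P(0,n)=\prod_{i=1}^n(1-2^{-i})$ from Lemma~\ref{lem:inv}, whereas you take the raw ratio and handle the prefactor and product pieces directly; the arithmetic content is the same.
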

\begin{proof}
Substituting $d=n-1$ and $q=2$ in \eqref{matrices_with_d_rank} we obtain
\be{\frac{2^n-1}{2-1}\cdot\prod_{i=0}^{n-2}(2^n-2^i)\label{matrices_1_dim_nullsp}}
Dividing \eqref{matrices_1_dim_nullsp}, by $2^{n^2}$ by means of dividing by $2^n$ each of the $n-1$ terms in the product as well as the fraction outside the product, we get
\be{P(1,n)=(1-2^{-n})\prod_{i=2}^n(1-2^{-i})\label{proportion_1_d_nullsp}}
Comparing \eqref{proportion_1_d_nullsp} with \eqref{invertibles'_proportion}, we obtain the result to be proved.
\end{proof}

We now prove what is evident from Table~\ref{tab:prop_1_dim_nullsp}, namely that more than $50\%$ of $n\times n$ matrices have a $1$-dimensional null space, for any $n>1$.

\begin{thm}\label{thm:inv}
For $n>1$, for more than half the matrices in $(\bb F_2)^{n\times n}$ have a $1$-dimensional null space.
\end{thm}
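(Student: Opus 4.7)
The plan is to prove the stronger statement that the sequence $\{P(1,n)\}_{n\ge 1}$ is strictly increasing, and then combine this with the direct computation $P(1,1)=1/2$. This will immediately give $P(1,n)>1/2$ for every $n>1$. Monotonicity is the right lens here because $P(1,n)$ approaches its limit $2\prod_{i=1}^{\infty}(1-2^{-i})\approx 0.577$ from below, so the hardest case is the smallest one.

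Substituting \eqref{invertibles'_proportion} into \eqref{proportion_relation} gives the convenient product form
\[
P(1,n) \;=\; 2(1-2^{-n})\prod_{i=1}^{n}(1-2^{-i}) \;=\; 2\Bigl(\prod_{i=1}^{n-1}(1-2^{-i})\Bigr)(1-2^{-n})^{2}.
\]
From this form, the ratio of consecutive terms telescopes to
\[
\frac{P(1,n+1)}{P(1,n)} \;=\; \frac{(1-2^{-(n+1)})^{2}}{1-2^{-n}}.
\]
Expanding the numerator, $(1-2^{-(n+1)})^{2} = 1-2^{-n}+2^{-2n-2}$, so the numerator exceeds the denominator by exactly $2^{-2n-2}>0$. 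Hence $P(1,n+1)>P(1,n)$ for all $n\ge 1$.

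To finish, I will verify the base case directly: setting $n=1$ in \eqref{proportion_relation} gives $P(1,1)=2\cdot\tfrac{1}{2}\cdot\tfrac{1}{2}=\tfrac{1}{2}$ (one can also note that the two $1\times 1$ matrices over $\mathbb{F}_2$ split evenly between $P(0,1)$ and $P(1,1)$). Iterating the strict inequality $P(1,n+1)>P(1,n)$ starting at $n=1$ yields $P(1,n)>\tfrac{1}{2}$ for every $n>1$.

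There is essentially no serious obstacle once the right quantity is isolated; the argument collapses to the trivial inequality $2^{-2n-2}>0$. The one pitfall to avoid is reaching for the uniform lower bound $P(0,n)>2^{-8/7}e^{-19/42}\approx 0.288$ proved in the preceding lemma: combined with $(1-2^{-n})\ge 3/4$ for $n\ge 2$, this only yields $P(1,n)>0.432$, which is not sharp enough. Exploiting monotonicity in $n$ sidesteps this loss entirely and produces the tight conclusion with minimal work.
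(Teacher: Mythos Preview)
Your proof is correct and is genuinely different from the paper's argument. The paper plugs the analytic estimate $P(0,n)>2^{-8/7}e^{-19/42}$ (obtained via the integral bound on $-\ln(1-x)$) into \eqref{proportion_relation} to get $P(1,n)>2^{-1/7}(1-2^{-n})e^{-19/42}$; this exceeds $1/2$ only once $n\ge 3$, so the case $n=2$ is covered by the direct evaluation $P(1,2)=9/16$ displayed in \eqref{tab:prop_1_dim_nullsp}. Your observation that the uniform bound alone gives only about $0.432$ at $n=2$ is therefore accurate, though the paper's ``one immediately calculates'' is presumably absorbing that finite check. Your monotonicity argument sidesteps this split entirely: the telescoping ratio $(1-2^{-(n+1)})^{2}/(1-2^{-n})=1+2^{-2n-2}/(1-2^{-n})>1$ is elementary, requires none of the logarithm estimates, and reduces the whole theorem to the single evaluation $P(1,1)=1/2$. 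What the paper's route buys in exchange is an explicit asymptotic lower bound ($\approx 0.576$) uniform in $n$, which your monotonicity proof does not by itself produce, though it is of course recoverable from the limit $2\prod_{i\ge 1}(1-2^{-i})$.
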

\begin{proof}
As a corollary of \eqref{final_P(0,n)_bound} and \eqref{proportion_relation} we have
\be{P(1,n)>2^{-1/7}(1-2^{-n})e^{-19/42}.}
One immediately calculates that $P(1,n)>0.5$ or $50\%$ for all $n\ge2$.
\end{proof}
By the proved inequality, the proportion in fact exceeds $57.16\%$ for all $n\ge7$ which is more than four in every seven. From Equation~\eqref{tab:prop_1_dim_nullsp} one can see that the asymptotic proportion of $n\times n$ matrices with a $1$-dimensional null space as $n$ gets arbitrarily large is about $57.76\%$. However, in a nutshell it is proven that the modal dimension of the null-space and hence the fixed-point space is 1.

Equation~\eqref{tab:prop_1_dim_nullsp} provides further evidence that matrices with a $1$-dimensional null space are most abundant. As $n\to\infty$, the proportion of $n\times n$ matrices with a $1$-dimensional null-space approximates $0.5776$. 
\begin{align}\label{tab:prop_1_dim_nullsp}
    P(1,n=1) &=0.5 \nonumber \\
    P(1,n=2) &=0.5625 \nonumber \\
    P(1,n=3) &=0.574219 \nonumber  \\
    P(1,n=4) &=0.576782 \\
    P(1,n=5) &=0.577383 \nonumber  \\
    &\vdots \nonumber \\
    P(1,n=32) &=0.577576 \nonumber 
\end{align}


It is thus more likely than not that the null space of a random matrix $M\in(\bb F_2)^{n\times n}$ is 1-dimensional.

\section{Discussions and Conclusions}\label{sec:discussion}

Linear Boolean networks constitute a simple class of Boolean networks and several tools have been derived for obtaining information about their dynamics. In particular, there exists a mathematical framework for obtaining the number of attractors and the height of their basins for linear Boolean systems~\cite{RHT}. Basically, one can obtain a generating function that provides the number of cycles of a given length from the minimal polynomial associated to the matrix of a linear system. We implemented the approach in~\cite{RHT} which allowed us to quantify the attractor distribution of a linear system in terms of the average number of attractors of a given length without identifying the actual states in the attractors. In this paper, we study the stability of linear Boolean networks with different in-degree distributions by calculating Derrida curves and quantifying the number of attractors and cycle lengths imposed by their network topologies. Derrida curves are commonly used to measure the stability of networks and several parameters such as the average degree $K$ and the output bias $p$ can indicate if a network is stable, critical, or unstable. For random Boolean networks there is a critical connectivity value $K_c$ such that if $K<K_c$ networks operate in the ordered regime, and if $K>K_c$ networks operate in the chaotic regime. For instance, for unbiased random functions, $K_c=2$. For networks governed by canalizing functions, a class of Boolean functions with a special structure, the phase transition occurs at an even higher average degree~\cite{manicka2022effective}. In this paper, we have shown that for linear networks, which are the least canalizing and most unstable, the phase transition from order to chaos already happens at an average in-degree of $K_c=1$.
We also show that unstable networks exhibit a large number of attractors with very long limit cycles while stable and critical networks have fewer attractors with shorter limit cycles.

Additionally, we presented theoretical results to study important properties of the dynamics of linear systems such as the expected number of fixed points which here we showed is 2. We also presented a formula for the proportion of attractor states in linear systems.
Finally, we provided a formula to quantify the number of bijective linear Boolean networks. Furthermore, we provided an estimate for the percentage of linear maps that are invertible.
\new{Even though we have focused on synchronous Boolean networks, we emphasize that the results about fixed points do not depend on the updating schedule. That is, the formulas for the average number of fixed points will remain the same when we consider linear systems with stochastic updating schemes (e.g., asynchronous systems \cite{albert2015signaling}, SDDS~\cite{murrugarra2012modeling}, PBN~\cite{shmulevich2002probabilistic}). 
Likewise, many of our results in Section~\ref{sec:theoretical_results} are valid for general multistate linear networks where the variables can take on more than two states.}

We note that one reason for the instability of linear systems is likely due the lack of canalization in linear functions. 
It has been shown that Boolean networks with canalizing rules exhibit more stability where each layer of canalization contributes additional stability~\cite{li2013boolean, kauffman2004genetic, murrugarra2011regulatory, kadelka2017influence}. Linear rules are the least canalizing type of Boolean rule~\cite{kadelka2020collectively}. Thus, it is not surprising that linear networks are mostly unstable.

\new{To conclude, we note that most published models using the Boolean network approach are nonlinear and canalizing \cite{kadelka2020meta}. However, a linear representation of a nonlinear system might offer additional tools for model analysis. In fact, any Boolean network can be represented as a linear system in a higher dimensional space \cite{shmulevich2002probabilistic}. Moreover, in recent years, the semi-tensor product representation of Boolean networks have been used for the analysis and control of Boolean networks \cite{cheng2010analysis,luo2014controllability,chen2021model}. Indeed, the semi-tensor representation is a type of linear representation. Thus, the contributions of this paper may have important implications for other classes of Boolean networks.}
Finally, linear Boolean functions tend to be highly nonlinear in the continuous generalization of Boolean functions~\cite{manicka2021biological}. For instance, the XOR rule is more nonlinear than the AND or OR rules, both of which are canalizing. A comprehensive analysis of these relationships is beyond the scope of this paper and we leave it for future work.

\section{Acknowledgements}

K.C. thanks Gergő Nemes, research fellow at Alfréd Rényi Institute of Mathematics, for his pertinent input in approximating convergent products.
D.M. was partially supported by a Collaboration grant (850896) from the Simons Foundation. C.K was partially supported by a Collaboration grant (712537) from the Simons Foundation. The contribution of R.L. was partially supported by NIH Grants 1U01EB024501-
01, 1 R01AI135128-01 and 1R01GM127909-01.

\bibliographystyle{unsrt}
\bibliography{ref_linear_BNs}
\appendix

\section{Nilpotent Trees and Cycles}
In order to find all the cycles in a Linear Boolean network, we need to know how to find the products of cycles arising from different $p$-blocks for different values of $p$. In this section, we denote by $(a,b)$ and $[a,b]$ the GCD and LCM respectively of $a$ and $b$.

\begin{lem}
The height of the nilpotent tree of a linear Boolean network is the multiplicity of the linear factor $x$ in the minimal polynomial.
\end{lem}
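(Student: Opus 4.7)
The plan is to identify the height of the nilpotent tree with the nilpotency index of $M$ restricted to the nilpotent summand $N$, and then use the primary decomposition of $M$ to relate that index to the multiplicity of $x$ in the minimal polynomial.

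First I would fix notation. Write the minimal polynomial of $M$ as $m(x) = x^{r}\,g(x)$ where $g(0)\neq 0$, so $\gcd(x^{r},g(x))=1$. By the primary decomposition theorem applied to $m(x)$, we obtain $\mathbb{F}_{2}^{n} = \ker(M^{r}) \oplus \ker(g(M))$; this is exactly the decomposition $\mathbb{F}_{2}^{n} = N \oplus B$ referred to in Theorem~\ref{thm:att_states}, with $N=\ker(M^{r})$ and $B=\ker(g(M))$. Because $g(0)\neq 0$, the operator $M$ is invertible on $B$ (one can write $M^{-1}|_{B}$ as a polynomial in $M$ using the relation $g(M)|_{B}=0$), and $M$ is nilpotent on $N$.

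Next I would translate the combinatorial quantity into an algebraic one. The nilpotent tree consists of the transient states that eventually flow into an attractor, and its height is the smallest integer $h$ such that $M^{h}v$ lies in the bijective (periodic) part for every $v\in\mathbb{F}_{2}^{n}$. Using the decomposition $v=v_{N}+v_{B}$, we have $M^{k}v_{B}\in B$ for all $k$, while $M^{k}v_{N}\in N$, so $M^{k}v$ lies in $B$ exactly when $M^{k}v_{N}=0$. Consequently the height of the nilpotent tree equals the nilpotency index of $M|_{N}$, i.e.\ the smallest $h$ with $M^{h}|_{N}=0$.

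Finally I would show this index equals $r$. The nilpotency index of $M|_{N}$ is the degree of the minimal polynomial of $M|_{N}$. Since $M^{r}|_{N}=0$, that minimal polynomial is $x^{s}$ for some $s\leq r$. If we had $s<r$, then $x^{s}g(x)$ would annihilate $M$ (it kills $N$ by construction and kills $B$ because $g(M)|_{B}=0$), contradicting the minimality of $x^{r}g(x)=m(x)$. Hence $s=r$, proving that the height of the nilpotent tree equals $r$, the multiplicity of $x$ in the minimal polynomial.

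The only mildly delicate step is the last equality $s=r$, which rests on the standard fact that the minimal polynomial of $M$ restricted to a primary component is exactly the corresponding primary factor; everything else is bookkeeping around the primary decomposition and the definition of the tree height.
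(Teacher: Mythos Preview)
Your argument is correct and follows essentially the same line as the paper's: both identify the height of the nilpotent tree with the nilpotency index of $M$ on the nilpotent summand and then equate that index with the multiplicity of $x$ in the minimal polynomial. The paper's proof is terse---it directly characterizes the height as the largest $k$ for which some $v$ satisfies $M^{k}v=0$ but $M^{k-1}v\neq 0$ and asserts this equals the multiplicity of $x$---whereas you spell out the primary decomposition $\mathbb{F}_{2}^{n}=N\oplus B$ explicitly and justify the equality $s=r$ by a minimality argument, but the underlying idea is the same.
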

\begin{proof}
The nilpotent tree is composed of vectors $v$ and edges $(v,Mv)$ such that the terminal vertex is the $0$-vector. The height of the tree is the length of longest path to the zero vector. Thus we are really looking for the largest $k$ for which there is a $v$ starting from which there is a path - to the zero vector - of length $k$. In such a case $M^kv=0$ and $M^{k-1}v\ne0$. Such a $k$ is the smallest index $j$ such that $M^j$ has as its null-space the entire set of transient states. Hence $k$ is the multiplicity of the linear factor $x$ in the minimal polynomial.
\end{proof}
\begin{dfn}
The order $\ord(f)$ of an irreducible polynomial $f(x)$ over a field $F$ is the order of the $[x]$ in the multiplicative (quotient) group $\disp\frac{F[x]}{f(x)}$ where $[x]$ is the residue class of the polynomial $x$.\label{order_polynomial}
\end{dfn}
There is no known formula to compute the order of an irreducible polynomial over any field. But $\ord (f^\ell)$ can be quickly computed in terms of $\ord f(x)$ whenever $f(x)$ is irreducible. In fact we have from
\begin{thm} \emph{(\cite[Theorem 3.1]{RHT})}\label{product_cycles_two}
The product of two cycles $C_m$ and $C_n$ of lengths $m$ and $n$ respectively is given by\be{C_m\times C_n=(m,n)C_{[m,n]}}where $(m,n)$ and $[m,n]$ are the GCD and the LCM respectively, of $m,n$.
\end{thm}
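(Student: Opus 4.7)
The plan is to interpret $C_m \times C_n$ as the diagonal dynamical action on the Cartesian product of the two cycles and then to identify the common orbit length and the number of orbits by elementary bookkeeping.

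First, I would fix generators for the two cyclic dynamical systems: let $T_m: C_m \to C_m$ and $T_n: C_n \to C_n$ denote the shift maps, so that $T_m$ has order $m$ and $T_n$ has order $n$. The diagonal map $T(x,y) = (T_m x, T_n y)$ then governs the dynamics on the product set $C_m \times C_n$, which has cardinality $mn$, and iteration gives $T^k(x,y) = (T_m^k x, T_n^k y)$.

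Next, I would determine the common orbit length. For a point $(x,y)$, its period is the least positive $k$ satisfying $T_m^k x = x$ and $T_n^k y = y$; since $T_m$ and $T_n$ act with orders $m$ and $n$ on their respective cycles, these conditions amount to $m \mid k$ and $n \mid k$. The smallest such $k$ is $[m,n]$, and this value is independent of the choice of $(x,y)$, so every orbit in $C_m \times C_n$ has length exactly $[m,n]$.

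Finally, I would count orbits by a cardinality argument: the product state space partitions into orbits of common size $[m,n]$, hence the number of orbits is $mn/[m,n]$. Applying the elementary identity $mn = (m,n)\,[m,n]$ yields exactly $(m,n)$ orbits of length $[m,n]$, which is the claim $C_m \times C_n = (m,n)\,C_{[m,n]}$.

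There is no real obstacle beyond pinning down the definition of the product operation; once the diagonal action is fixed, the argument is a direct application of orbit-counting together with the standard GCD--LCM identity, with nothing specific to Boolean networks or finite fields entering. The only subtlety worth flagging in a careful write-up is verifying that the minimum of the set $\{k>0 : m\mid k \text{ and } n \mid k\}$ really is $[m,n]$, which is immediate from the definition of least common multiple.
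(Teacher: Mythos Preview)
Your argument is correct: interpreting $C_m\times C_n$ as the diagonal shift on the Cartesian product, showing every orbit has length $[m,n]$, and then counting orbits via $mn/[m,n]=(m,n)$ is exactly the standard proof. Note that the paper itself does not supply a proof of this theorem at all --- it simply quotes the result from \cite[Theorem 3.1]{RHT} and uses it to derive the multi-factor formula \eqref{product_cycles_mult} --- so there is no in-paper argument to compare against; your write-up would in fact fill that gap.
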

The above theorem can be used effectively to multiply cycles two at a time. However, it needs to be expressed differently in order to be extended naturally to products of multiple cycles. Since $mn=(m,n)[m,n]$, we have:
\begin{cor}
\be{C_m\times C_n=\frac{mn}{[m,n]}C_{[m,n]}\label{product_cycles_two_extendable}}
\end{cor}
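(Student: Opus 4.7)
The plan is to deduce this corollary directly from Theorem \ref{product_cycles_two}, which already gives $C_m \times C_n = (m,n)\, C_{[m,n]}$. The only additional ingredient needed is the classical number-theoretic identity $(m,n)\cdot [m,n] = mn$, valid for any positive integers $m,n$. This identity is standard and follows immediately from the prime factorizations of $m$ and $n$, taking componentwise minima for the GCD and maxima for the LCM.

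Given this, I would first isolate the coefficient in Theorem \ref{product_cycles_two}, namely $(m,n)$, and rewrite it as $(m,n) = \frac{mn}{[m,n]}$ using the identity above (which is legitimate since $[m,n] \ne 0$ for $m,n \ge 1$). Substituting this expression into the statement of Theorem \ref{product_cycles_two} yields the claimed equality $C_m \times C_n = \frac{mn}{[m,n]}\, C_{[m,n]}$.

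There is no real obstacle here; the corollary is purely an algebraic rewriting of the parent theorem designed to put the formula in a form whose structure (a product $mn$ in the numerator and a single LCM in the denominator) generalizes cleanly to iterated products of several cycles, where the natural formula will involve $\prod_i m_i$ divided by the LCM of all the $m_i$. The only minor point worth noting in the write-up is that the rewriting is valid regardless of whether $m$ and $n$ are coprime: when $(m,n)=1$ we recover $C_m \times C_n = C_{mn}$, and in general the factor $mn/[m,n]$ simply records the number of disjoint copies of the longer cycle $C_{[m,n]}$ produced by the product.
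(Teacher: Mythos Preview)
Your proposal is correct and follows exactly the paper's approach: apply the identity $mn=(m,n)[m,n]$ to rewrite the coefficient $(m,n)$ from Theorem~\ref{product_cycles_two} as $mn/[m,n]$. There is nothing to add.
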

Inductively, one can show using \eqref{product_cycles_two_extendable} above that
\be{C_{m_1}\times C_{m_2}\times\cdots\times C_{m_n}=\frac{m_1m_2\cdots m_n}{[m_1,m_2,\ldots,m_n]}C_{[m_1,m_2,\ldots,m_n]}\label{product_cycles_mult}}

\section{Number of Attractors of Different Lengths}
For completeness here we describe the process for computing the number of attractor of different lengths. For details, see~\cite{RHT}.
We first find the number attractors of each length. This is done as follows:
\begin{enumerate}
\item Let $f$ be the minimal polynomial associated to the matrix of a linear system. For each irreducible factor $p$ of $f$, we do the following:
\begin{enumerate}
\item Compute the (finite) sequence $(b_0,b_1,\ldots)$ where $b_m=\text{rank}(p(M)^m)$ by finding $L=p(M)$ and then the row-reduced echelon form of $L^m$
\item Use \eqref{block_size_by_succ_diff} to compute $n_m(p)$, the number of $p$-blocks of multiplicity $m$, for each $m$, $1\le m\le s$ where $s$ is the multiplicity of $p$ in the characteristic polynomial
\item We compute $\ord(p)$ as per Definition \ref{order_polynomial}.
\item For each $p$-block we use formula in \cite[Theorem 5]{RHT} to compute the number of cycles of each length.
\end{enumerate}
\item We then compute the number of attractors of all possible lengths taking all possible products of cycles, one each from every $p$-block for every $p$ using \eqref{product_cycles_mult}.
\end{enumerate}
\subsection{Representing All Attractors Corresponding to Each Irreducible Factor}
\begin{enumerate}
    \item For each irreducible factor $p$ of the characteristic polynomial $f$ we do the following
    \begin{enumerate}
        \item For each $\ell$, ($1\le\ell\le m$) where $m$ is the multiplicity of $p$ in the minimal polynomial of $M$, We do the following:
        \begin{enumerate}
            \item We find a set $V_i$ of vectors that satisfy the pair (say $C_\ell$) of conditions below:
        \[p(M)^{\ell-1}(v)\ne0\quad\quad p(M)^\ell(v)=0\]
        such that all the vectors found above span the null space of $p(M)^m$. This is done inductively by finding a basis of the null space of $p(M)^\ell$ and removing those vectors in that are already in the null space of $p(M)^{\ell-1}$.
        \item Additionally we find representatives of all orbits of the multiplicative group $\bb F_2[x]/(p(x)^\ell)$ under the action of multiplication by $[x]$.
        \item With the information found above, one can evaluate the matrix $M$ at the representative identified above for each orbit, then apply the resulting matrix to each vector satisfying the pair $C_\ell$ of conditions.
        \end{enumerate}
    \end{enumerate}
\item Once the attractors corresponding to each irreducible component are identified, we can find all attractors, by choosing all possible combinations choosing at most one vector (in the list generated above) from each irreducible invariant subspace. Mild usage of trial and error would be needed since a vector in the null space of $p(M)^{\ell_1}$ and another vector in the null space of $p(M)^{\ell_2}$ could lie in the same irreducible invariant subspace, which would result in a redundancy in generating the cycles.
\end{enumerate}

\section{Linear Algebra}
Some preliminaries from linear algebra are required to describe the method used to compute the number of attractors of various lengths.
\begin{thm}
Suppose matrix is a $M$ (over any field) whose characteristic and minimal polynomial are both $p(x)^k$ for some $k\ge1$ where $p$ is an irreducible polynomial of degree $d$. Then the nullity $n(p(M))$, of $p(M)$, equals $d$.
\label{frobenius_block_nullity}
\end{thm}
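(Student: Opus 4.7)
The plan is to exploit the structure theorem for finitely generated modules over the PID $F[x]$, viewing $V = F^n$ as an $F[x]$-module where $x$ acts as $M$. Under this lens, the computation of $\dim \ker p(M)$ becomes a clean question about a single cyclic module, rather than a matrix calculation.

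First, I would argue that the hypothesis ``minimal polynomial = characteristic polynomial = $p(x)^k$'' forces $V$ to be a \emph{cyclic} $F[x]$-module, specifically $V \cong F[x]/(p(x)^k)$. Indeed, the invariant factor decomposition gives
\[
V \;\cong\; F[x]/(p(x)^{k_1}) \oplus F[x]/(p(x)^{k_2}) \oplus \cdots \oplus F[x]/(p(x)^{k_r}),
\]
with $k_1 \geq k_2 \geq \cdots \geq k_r \geq 1$. The largest invariant factor is the minimal polynomial, so $k_1 = k$. Comparing $F$-dimensions gives $d(k_1 + k_2 + \cdots + k_r) = \deg(\text{char poly}) = dk$, which forces $r = 1$. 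Thus $V \cong F[x]/(p(x)^k)$ as an $F[x]$-module.

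Next, under this isomorphism the operator $p(M)$ corresponds to multiplication by $p(x)$ on $F[x]/(p(x)^k)$. Its kernel is
\[
\{\,[f(x)] \in F[x]/(p(x)^k) \,:\, p(x)\,f(x) \equiv 0 \pmod{p(x)^k}\,\}
 \;=\; \bigl(p(x)^{k-1}\bigr)\big/\bigl(p(x)^k\bigr),
\]
i.e.\ the residue classes of polynomials divisible by $p(x)^{k-1}$. An $F$-basis is given by $\{\,p(x)^{k-1},\, x\,p(x)^{k-1},\, \ldots,\, x^{d-1}\,p(x)^{k-1}\,\}$, since any $F[x]$-multiple of $p(x)^{k-1}$ modulo $p(x)^k$ can be written uniquely as $g(x)\,p(x)^{k-1}$ with $\deg g < d$. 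This has cardinality $d$, so $\dim_F \ker p(M) = d$, as claimed. A rank–nullity cross-check confirms this: the image of $p(M)$ is $(p(x))/(p(x)^k)$, of $F$-dimension $d(k-1)$, and $dk - d(k-1) = d$.

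The only real content is the first step — invoking the structure theorem to reduce to the cyclic case. Once $V \cong F[x]/(p(x)^k)$ is in hand, everything else is elementary ideal bookkeeping in a quotient ring, so I do not anticipate a serious obstacle. If one prefers a more concrete route that avoids the structure theorem, one can instead observe that ``minimal polynomial = characteristic polynomial'' is equivalent to $M$ having a cyclic vector, which directly yields a basis $\{v, Mv, M^2 v, \ldots, M^{dk-1} v\}$ of $V$; relative to this basis $M$ is the companion matrix of $p(x)^k$, and the same computation goes through.
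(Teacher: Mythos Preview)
Your proof is correct. Both arguments hinge on the same underlying fact --- that equality of the minimal and characteristic polynomials forces $V$ to be cyclic --- but the executions diverge. The paper works concretely: it selects a cyclic vector $v$ (one with $p(M)^{k-1}v\neq 0$), writes down the explicit ordered basis $\{M^i p(M)^j v : 0\le i<d,\ 0\le j<k\}$, and observes that in this basis $p(M)$ acts as a block-shift, sending each group of $d$ consecutive basis vectors to the next group and annihilating the last. The matrix of $p(M)$ in this basis is displayed explicitly, and its nullity $d$ is read off by inspection. You instead invoke the structure theorem to identify $V\cong F[x]/(p(x)^k)$ abstractly and compute $\ker p(M)$ as the ideal quotient $(p^{k-1})/(p^k)$, whose $F$-dimension is $d$ by degree counting. (Your closing remark about the cyclic-vector/companion-matrix alternative is essentially the paper's strategy in compressed form.)

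Your route is shorter and more conceptual for this single result. The paper's explicit basis, while more labor here, sets up a block-shift picture that immediately yields the nullities of \emph{all} higher powers $p(M)^m$ --- machinery the paper then reuses in the subsequent corollary to count $p$-blocks of each multiplicity via successive rank differences.
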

\begin{proof}
We can reduce $M$ to the Frobenius normal form
\be{F_{p,k}=\bpm{C&0&0&\cdots&0&0\\U&C&0&\cdots&0&0\\0&U&C&\cdots&0&0\\\vdots&\vdots&\vdots&\ddots&\vdots&\vdots\\0&0&0&\cdots&C&0\\0&0&0&\cdots&U&C}\label{frobenius}}here $C$, a $d\times d$ matrix, is the companion matrix for the polynomial $p(x)$, $U$ is the $d\times d$ matrix whose only non-zero entry is a 1 at the top right corner, and $F$ is a $dk\times dk$ matrix. The above form can be obtained by means of the ordered basis
\[v,Mv,M^2v,\ldots,M^{d-1}v,p(M)v,Mp(M)v,M^2p(M)v,\ldots M^{d-1}p(M)v,\ldots\]
\[\ldots p(M)^{k-1}v,Mp(M)^{k-1}v,M^2p(M)^{k-1}v\ldots,M^{d-1}p(M)^{k-1}v\]
where $v$ is a vector such that the smallest $j$ for which $p(M)^jv=0$ is $j=k$.

Thus $M=PDP^{-1}$ where $P$ is the $dk\times dk$ matrix whose columns are the above $dk$ basis vectors in that order. For each irreducible factor $p$ we compute the ranks of $p(M)$, $p(M)^2$, ... etc to obtain the no. of blocks corresponding to $p$, of different sizes. As a result we have\bea{p(M)&=&P\cdot p(D)\cdot P^{-1}\label{similarity}\\p(M)\cdot P&=&P\cdot p(D)\label{change_basis_matrix}}
By definition of $P$, the LHS of \eqref{change_basis_matrix} consists of all columns between the $(d+1)^{th}$ and the $dk^{th}$ column inclusive, followed by $d$ columns which are entirely $0$. This is because upon application of $p(M)$, the basis elements (respectively) yield
\[p(M)v,\ldots M^{d-1}p(M)v,p(M)^2v,\ldots,M^{d-1}p(M)^2v,p(M)^3v,\ldots,M^{d-1}p(M)^3v\ldots,\]
\[\ldots p(M)^kv,Mp(M)^kv,M^2p(M)^kv\ldots,M^{d-1}p(M)^kv.\]
However the last $d$ vectors are 0 since $p(M)^kv=0$ by the minimality of $p(x)^k$. Thus we are left-shifting the columns of $P$ by $d$. By \eqref{change_basis_matrix}, RHS also must contain the same (left-shifted) columns of $P$, which is equivalent to right-multiplication by
\be{p(D)=\bpm{0&0&0&\cdots&0&0\\I&0&0&\cdots&0&0\\0&I&0&\cdots&0&0\\\vdots&\vdots&\vdots&\ddots&\vdots&\vdots\\0&0&0&\cdots&0&0\\0&0&0&\cdots&I&0}\label{rank_first_power}}
where $I$ is the $d\times d$ identity matrix. By the similarity of $p(M)$ and $p(D)$, they must have the same rank, and the latter is at once seen to have nullity $d$.
\end{proof}
\begin{dfn}
Suppose we have an $n\times n$ matrix $M$ with characteristic polynomial $f(x)$ where $p(x)\mid f(x)$ is an irreducible factor. Define a $p$-\em block of multiplicity \em $m$ to be an irreducible invariant (under $M$) subspace $W\subset\bb R^m$ such that the minimal polynomial of $M$ restricted to $W$ is $p(x)^m$.
\end{dfn}
\begin{dfn}
Given a sequence $a_0,a_1,...,a_k$ we define the following.\begin{enumerate}
    \item[]The first order forward difference by $\Delta a_i=a_{i+1}-a_i$.
    \item[]The first order backward difference by $\nabla a_i=a_i-a_{i-1}$,
    \item[]The second order forward difference by $\Delta^2 a_i=\Delta a_{i+1}-\Delta a_i$.
    \item[]The second order backward difference by $\nabla^2 a_i=\nabla a_{i+1}-\nabla a_i$.
    \label{succ_diff}
\end{enumerate}  
\end{dfn}
\begin{lem}
\[\nabla a_i=\Delta a_{i-1}\]\label{delta_napla}
\end{lem}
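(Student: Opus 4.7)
The plan is to unfold the two definitions from Definition~\ref{succ_diff} and check that they produce the same expression. This is a purely symbolic manipulation; no structural ideas are required, so the ``proof'' reduces to a single one-line comparison.

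More precisely, I would first write out $\Delta a_{i-1}$ using the definition of the first-order forward difference, substituting the index $i-1$ in place of $i$ in $\Delta a_i = a_{i+1}-a_i$. This gives $\Delta a_{i-1} = a_i - a_{i-1}$. I would then write out $\nabla a_i$ using the definition of the first-order backward difference, namely $\nabla a_i = a_i - a_{i-1}$, and observe that the two right-hand sides are literally equal.

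There is no real obstacle here. The only place one could slip is an off-by-one error in the index shift, so I would be careful to verify that $\Delta$ shifts forward by one (so that evaluating $\Delta$ at $i-1$ produces the pair $(a_i, a_{i-1})$), which matches exactly the pair appearing in $\nabla a_i$. Because the identity holds termwise, no induction, range condition, or auxiliary lemma is needed, and the lemma serves purely as a notational bridge that will be invoked later (e.g.\ in identities like \eqref{block_size_by_succ_diff}) to translate freely between forward and backward differences.
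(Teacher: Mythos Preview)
Your proposal is correct and is exactly the intended argument: the paper in fact omits the proof entirely, treating the identity as an immediate consequence of Definition~\ref{succ_diff}, which is precisely the one-line unfolding you describe.
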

\begin{cor}
Let $M$ be an $n\times n$ matrix with characteristic polyomial $f(x)$. Let $p(x)$ be an irreducible factor of $f(x)$. Define sequences $(a_r)$ and $(b_r)$ by $a_r=n(p(M)^r)$ and $b_r=\text{rank}(p(M)^r)$. Then $n_m(p)$ the number of $p$-blocks of multiplicity $m$ is
\be{n_m(p)=-\frac{\Delta^2a_{m-1}}d=\frac{\Delta^2b_{m-1}}d\label{block_size_by_succ_diff}}
\label{cor_succ_diff}
\end{cor}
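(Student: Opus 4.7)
The plan is to reduce the problem to a single $p$-block, compute the nullity of $p(M)^r$ on such a block, and then sum over all blocks. First, since $a_r$ and $b_r$ satisfy $a_r+b_r=n$ for every $r$, the operator $\Delta^2$ annihilates the constant $n$ and therefore $\Delta^2 a_{m-1}=-\Delta^2 b_{m-1}$; this immediately gives the equivalence of the two forms in the displayed equation, so it suffices to prove the formula for $a_r$.

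Next I would reduce to the $p$-primary component. By the rational canonical form, the space $\mathbb F_2^n$ decomposes (as an $M$-module) into the direct sum of its $q$-primary components as $q$ ranges over the irreducible factors of the characteristic polynomial. Only the $p$-primary component contributes to the null space of any power of $p(M)$, since on the other primary components $p(M)$ is invertible. The $p$-primary component further decomposes into $p$-blocks of various multiplicities, with exactly $n_k(p)$ blocks of multiplicity $k$. Thus
\[
a_r \;=\; \operatorname{nullity}\bigl(p(M)^r\bigr) \;=\; \sum_k n_k(p)\cdot \operatorname{nullity}\bigl(p(M)^r\big|_{\text{block of mult.\ }k}\bigr).
\]

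The main computational step is the following claim: if $W$ is a $p$-block of multiplicity $k$ (so $\dim W=dk$ with $d=\deg p$), then the nullity of $p(M)^r$ restricted to $W$ equals $d\cdot\min(r,k)$. Indeed, on $W$ the minimal polynomial of $M$ is $p(x)^k$, so $p(M)^r|_W=0$ for $r\ge k$, giving nullity $dk=d\cdot\min(r,k)$. For $r<k$, an application of Theorem~\ref{frobenius_block_nullity} to the Frobenius block $F_{p,k}$ in \eqref{frobenius} shows that $p(F_{p,k})^r$ is the matrix obtained from \eqref{rank_first_power} by shifting the identity blocks $r$ positions down, so its kernel consists of the first $r$ block-columns, of total dimension $dr$. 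Hence the nullity on $W$ is exactly $d\min(r,k)$ in all cases.

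Combining the two steps yields $a_r = d\sum_k n_k(p)\min(r,k)$. A direct computation of the forward differences gives
\[
\Delta a_r = a_{r+1}-a_r = d\sum_{k\ge r+1} n_k(p),
\qquad
\Delta^2 a_{r} = \Delta a_{r+1}-\Delta a_r = -d\, n_{r+1}(p),
\]
because $\min(r+1,k)-\min(r,k)$ equals $1$ if $k\ge r+1$ and $0$ otherwise. Setting $r=m-1$ recovers $n_m(p)=-\Delta^2 a_{m-1}/d$, and the $b_r$ version follows from the first paragraph.

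The main obstacle I anticipate is the precise bookkeeping of the nullity on a single $p$-block for general $r$; once the identity $\dim\ker\bigl(p(M)^r|_W\bigr)=d\min(r,k)$ is established from the explicit Frobenius form, the rest is a routine telescoping argument on the sequence $(a_r)$.
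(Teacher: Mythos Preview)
Your argument is correct and is essentially the same as the paper's: both reduce to the Frobenius block structure, observe that the $q$-primary components for $q\ne p$ contribute nothing to the null space of $p(M)^r$, compute the per-block nullity increment of $d$, and then telescope. The only cosmetic difference is that you package the block contributions into the closed form $a_r=d\sum_k n_k(p)\min(r,k)$ and apply $\Delta$ directly, whereas the paper computes $\nabla a_m=d\sum_{r\ge m}n_r(p)$ first and converts to forward differences via Lemma~\ref{delta_napla}; one small slip is that for the shifted block matrix $p(D)^r$ the kernel is supported on the \emph{last} $r$ block-positions rather than the first, but your dimension count $dr$ is correct regardless.
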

\begin{proof}
By the existence of a Frobenius decomposition and Theorem \ref{frobenius_block_nullity}, the result of left-multiplication of each $p$-block of multiplicity ${m-1}$ or less, by $p(M)^{m-1}$ is the zero matrix. Now, by \eqref{similarity} and \eqref{rank_first_power}, further multiplying $p(M)$, increases by $d$ the nullity of each $p$-block of multiplicity at least $m$, leaving the remaining blocks unchanged. The nullities of the $q$-blocks $(q\ne p$) too remain unchanged since such $q$ are relatively prime to $p$. In other words,
\bea{d\sum_{r\ge m}n_r(p)&=&n(p(M)^m)-n(p(M)^{m-1})\label{block_size_by_nullity}\\
&=&\text{rank}(p(M)^{m-1})-\text{rank}(p(M)^m)\label{block_size_by_rank}}
where \eqref{block_size_by_rank} follows from rank-nullity theorem. By definition of backward differences we have
\be{d\sum_{r\ge m}n_r(p)=\nabla a_m=-\nabla b_m\label{block_size_by_rank_diff}}
From \eqref{block_size_by_rank_diff} we have
\be{d\lr\sum_{r\ge m}n_r(p)-\sum_{r\ge m+1} n_r(p)\rr=d\cdot n_m(p)=\nabla a_m-\nabla a_{m+1}=\nabla b_{m+1}-\nabla b_m\label{block_size_by_rank_bdiff}}
By Lemma \ref{delta_napla} we have
\bea{&&d\cdot n_m(p)=\Delta a_{m-1}-\Delta a_m=\Delta b_m-\Delta b_{m-1}\nonumber\\&&d\cdot n_m(p)=-\Delta^2a_{m-1}=\Delta^2b_{m-1}\label{block_size_by_rank_fdiff}}
and the result follows.
\end{proof}

\end{document}